\definecolor{mycolor}{rgb}{0, 0, 0}
\newcommand{\revm}[1]{{\color{black} #1}}
\newcommand{\rev}[1]{{\color{black} #1}}
\newcommand{\rv}[1]{{\color{black} #1}}
\newcommand{\rvm}[1]{{\color{black} #1}}
\newcommand{\old}[1]{{}}
\title{Noncrossing Longest Paths and Cycles\thanks{This work has been presented at the \emph{32nd International Symposium on Graph Drawing and Network Visualization (Vienna, 2024)}, GD~2024. The main results and ideas have also been reported at \url{https://11011110.github.io/blog/2024/09/25/long-non-crossing.html}.}}
\author{
Greg Aloupis\thanks{Khoury College of Computer Sciences, Northeastern University, Boston, MA, USA, \texttt{g.aloupis@northeastern.edu}} 
\and Ahmad Biniaz\thanks{Corresponding author. School of Computer Science, University of Windsor, Windsor, ON, Canada, \texttt{abiniaz@uwindsor.ca}. Research supported by NSERC.}
\and Prosenjit Bose\thanks{School of Computer Science, Carleton University, Ottawa, ON, Canada, \texttt{jit@scs.carleton.ca, anil@scs.carleton.ca, michiel@scs.carleton.ca}. Research supported by NSERC.} 
\and Jean-Lou De Carufel\thanks{School of Electrical Engineering and Computer Science, University of Ottawa, Ottawa, ON, Canada, \texttt{jdecaruf@uottawa.ca, saeedodak@gmail.com}. Research supported by NSERC.} 
\and David Eppstein\thanks{Computer Science Department, University of California, Irvine, CA, USA, \texttt{eppstein@uci.edu}. Research supported by NSF grant CCF-2212129.} 
\and Anil Maheshwari\footnotemark[4]
\and Saeed Odak\footnotemark[5]
\and Michiel Smid\footnotemark[4]
\and Csaba D. T\'{o}th\thanks{Department of Mathematics, California State University Northridge, Los Angeles, CA; and Department of Computer Science, Tufts University, Medford, MA, USA, \texttt{csaba.toth@csun.edu}. Supported by NSF grant DMS-2154347.} 
\and Pavel Valtr\thanks{Department of Applied Mathematics, Charles University, Prague, Czech Republic, \texttt{valtr@kam.mff.cuni.cz}. {Research supported by Czech Science Foundation grant GA\v CR
23-04949X.}} }
\date{}
\newtheorem{lemma}{Lemma}
\newtheorem*{lemma*}{Lemma}
\newtheorem{corollary}{Corollary}
\newtheorem{proposition}{Proposition}
\newtheorem{conjecture}{Conjecture}
\newtheorem{theorem}{Theorem}
\newtheorem{observation}{Observation}
\newtheorem*{problem*}{Problem}
\newtheorem*{claim*}{Claim}
\newtheorem*{invariant*}{Invariant}
\newtheorem{question}{Question}
\newtheorem*{remark}{Remark}
\begin{document}
\maketitle
\begin{abstract}
Edge crossings in geometric graphs are sometimes undesirable as they could lead to unwanted situations such as collisions in motion planning and  inconsistency in VLSI layout. 
Short geometric structures such as shortest perfect matchings, shortest spanning trees, shortest spanning paths, and shortest spanning cycles on a given point set are inherently noncrossing.
However, the longest such structures need not be noncrossing. 
In fact, it is intuitive to expect many edge crossings in various geometric graphs that are longest.

Recently, \'{A}lvarez-Rebollar, Cravioto-Lagos, Mar\'{\i}n, Sol\'{e}-Pi, and Urrutia (Graphs and Combinatorics, 2024) constructed a set of points for which the longest perfect matching is noncrossing.
They raised several challenging questions in this direction. In particular, they asked whether the longest spanning path, on \rv{every} 
finite  
set of points in the plane, must have a pair of crossing edges.  They also conjectured that the longest spanning cycle must have a pair of crossing edges.  

In this paper, we give a negative answer to the question and also refute the conjecture. We present a framework for constructing arbitrarily large point sets for which the longest perfect~matchings, the longest spanning paths, and the longest spanning cycles are noncrossing.
\end{abstract}

\section{Introduction}

Traversing points in the plane by a polygonal path or cycle possessing a desired property has a rich background. For instance, the celebrated travelling salesperson problem asks for a polygonal path or cycle with minimum total edge length \cite{Arora1998,Mitchell99,Papadimitriou1977}. In recent years, there has been  increased interest in paths and cycles with properties such as being noncrossing \cite{Aichholzer2010,Cerny2007}, minimizing the longest edge length~\cite{AnKS21,Biniaz2022,KaoS09}, maximizing the shortest edge length \cite{Arkin1999}, minimizing the total or largest turning angle \cite{Aggarwal1999,Biniaz24,Dumitrescu12,Fekete1997}, and minimizing the number of turns
\cite{Biniaz24b,Dumitrescu2014,Stein2001} to name a few. \rv{Finding a} longest cycle---the MaxTSP \rv{problem}---is NP-hard in Euclidean spaces of dimension 
\rv{three or higher,} 
but the complexity of the \revm{MaxTSP problem in the plane} is unknown~\cite{Barvinok_2007,Barvinok2003,DBLP:conf/soda/Fekete99}.\footnote{It is interesting that the MaxTSP under the $L_1$-norm for points in the plane can be solved in linear time \cite{Barvinok2003,DBLP:conf/soda/Fekete99}, in contrast to the fact that MinTSP in this case is NP-hard~\cite{Garey1976}.} Paths and cycles that have combinations of these properties have also attracted attention. For example, 
\rv{maximizing the total edge length subject to the constraint that the edges are noncrossing}~\cite{Alon1995,Dumitrescu2010} is difficult: 
to achieve a larger length we typically introduce more crossings.

Edge crossings in geometric graphs are usually undesirable as they have the potential of creating unwanted situations such as collisions in motion planning and inconsistency in VLSI layout.
 They are also undesirable in the context of graph drawing and network visualization as they make drawings more difficult to read and use.
Short\rv{est} geometric structures such as shortest perfect matchings, shortest spanning trees, shortest spanning paths, and shortest spanning cycles are inherently noncrossing. This property, however, does not necessarily hold if the structure is not shortest. For long structures such as longest perfect matchings, longest spanning trees, longest spanning paths, and longest spanning cycles---the other end of the spectrum---it seems natural to expect many crossings. Counting crossings in geometric graphs and finding geometric structures with a minimum or maximum number of crossings are active research areas in discrete geometry. \rv{This line of research is also related \revm{to} geometric Ramsey-type problems~\cite{Aronov1994,Karolyi2013}. For example, it is known that every} set of $n$ points in the plane in general position admits a {\em crossing family} \rv{(a matching with pairwise crossing edges) of size $n^{1-o(1)}$~\cite{Pach2021}, but it remains open whether this bound can be improved to $\Omega(n)$~\cite{BMP05}}. 

The noncrossing property of shortest structures is mainly ensured by the triangle inequality. The triangle inequality also implies that the longest structures often have crossings because a structure usually gets longer by creating more crossings~\rv{\cite{Alon1995}}. \rv{There are 
approximation algorithms for the problem of finding longest noncrossing structures (such as matchings, paths, cycles, and trees) \cite{Alon1995,BiniazBCCEMS19,Cabello0KMT25,Dumitrescu2010}}. Along this direction, one might wonder whether a longest structure (defined on an arbitrarily large point set) is necessarily crossing. This was explicitly asked \rv{by Alvarez-Rebollar, Cravioto-Lagos, Mar\'{\i}n, Sol\'e-Pi, and Urrutia~\cite{Rebollar2024}.} Among other interesting results, they 
presented
arbitrarily large planar point sets for which the longest perfect matching is noncrossing. They asked the following question and proposed the following conjecture:
\begin{question}[\rv{\cite{Rebollar2024}}]
\label{path-question}
 For every sufficiently large planar point set, must the longest spanning path have two edges that cross each other?
\end{question} 
\begin{conjecture}[\rv{\cite{Rebollar2024}}]
\label{cycle-conjecture}
    The longest spanning cycle on every 
    sufficiently large 
    set of points in the plane has a pair of crossing edges.
\end{conjecture}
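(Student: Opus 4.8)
The plan is to prove the conjecture in its contrapositive form: I would show that every sufficiently large planar point set admits a spanning cycle strictly longer than any noncrossing one, so that no longest spanning cycle can be noncrossing. Equivalently, I would try to establish the local statement that \emph{every} simple polygon spanning the points (i.e.\ every noncrossing Hamiltonian cycle) can be lengthened by a single edge swap. The engine for this is the quadrilateral inequality: for four points in convex position with convex-hull order $a,b,c,d$, the two diagonals together strictly exceed either pair of opposite sides, $|ac|+|bd|>|ab|+|cd|$. Combined with a $2$-opt exchange, this says that whenever a cycle uses two edges $ab$ and $cd$ as opposite sides of such a convex quadrilateral, replacing them by the crossing diagonals $ac$ and $bd$ (reversing the intervening subpath to keep the result a single cycle) yields a strictly longer cycle that, because its two new edges cross, is crossing.

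First I would set up the $2$-opt move precisely. For the cycle $v_1v_2\cdots v_nv_1$, deleting edges $v_iv_{i+1}$ and $v_jv_{j+1}$ and reconnecting with $v_iv_j$ and $v_{i+1}v_{j+1}$ produces another spanning cycle, and the length changes by $(|v_iv_j|+|v_{i+1}v_{j+1}|)-(|v_iv_{i+1}|+|v_jv_{j+1}|)$. By the quadrilateral inequality this difference is positive precisely when $v_i,v_{i+1},v_j,v_{j+1}$ are in convex position with the two deleted edges as opposite sides of their hull. So the entire problem reduces to a purely combinatorial-geometric task: in an arbitrary simple polygon on the points, locate two edges that form opposite sides of the convex quadrilateral spanned by their four endpoints. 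To find such a pair I would exploit extremal structure — for instance, fix an edge $ab$ of the convex hull of the point set (all points lie on one side of its line) and argue that some second polygon edge is forced, together with $ab$, into the required convex cyclic order; alternatively I would sweep a direction and track the order of the polygon's edges, hoping the noncrossing constraint itself forces two nearly antipodal, nearly parallel edges with convex endpoints.

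The hard part — and the step I expect to be the true obstacle — is guaranteeing that such a beneficial convex quadruple exists in \emph{every} simple polygon. The base case of points in convex position is encouraging and in fact trivial: there the only noncrossing spanning cycle is the convex-hull boundary, which for $n\ge 4$ is strictly shorter than the cycle obtained by the diagonal swap, so the longest cycle is forced to cross. The trouble is the non-convex case. Local optimality of a maximum-length cycle and the noncrossing property can, a priori, coexist: it may happen that for \emph{every} pair of edges the four endpoints are either not in convex position, or are in convex position but in the ``wrong'' cyclic order, so that every available swap only \emph{shortens} the cycle. Ruling this out demands global information about the configuration that the simple-polygon hypothesis alone does not supply, so I would hunt for an extra structural lever — a potential or area argument, or a charging scheme over the convex-hull layers. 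My honest expectation, though, is that no such lever exists in general: the harder I push to constrain the configuration, the more it appears that one could \emph{engineer} a point set whose noncrossing spanning cycle is simultaneously $2$-opt optimal and globally longest, which would make the statement false rather than provable. I would therefore treat this existence claim as the crux, and, should it resist proof, pivot to constructing exactly such an engineered configuration.
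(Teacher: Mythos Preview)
Your instinct in the final paragraph is correct, and it is the whole story: the conjecture is \emph{false}. The paper does not prove it but refutes it, by constructing, for every $n\ge 3$, an $n$-point set whose longest spanning cycle is unique and noncrossing (Theorem~\ref{evenpath-thm} and Section~\ref{cycle-section}). So the bulk of your proposal---the attempt to show every simple polygon admits a length-increasing $2$-opt swap---is aimed at a statement that cannot be established.

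The paper even addresses your $2$-opt idea directly. Observation~\ref{inconsistent-obs} records exactly the quadrilateral/flip argument you describe: if a directed cycle has two non-adjacent edges in convex position with the same orientation along their convex hull, flipping them lengthens the cycle. But the paper immediately notes (and illustrates in Figure~\ref{inconsistent-fig}(b)) that there exist simple polygons in which \emph{no} such pair of edges is available, regardless of orientation. That is precisely the obstruction you anticipated; it is real, not merely hypothetical, and it kills the local-swap strategy.

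The paper's actual construction goes the other way. It starts from a carefully chosen set $P$ of points on the $x$-axis, where Lemma~\ref{cycle-lemma} characterises the longest cycles combinatorially (every edge must cross the median). It then assigns tiny $y$-coordinates to obtain $P'$, small enough that any longest cycle on $P'$ still projects to a longest cycle on $P$ (Lemma~\ref{cycle-epsilon-lemma}), but with the $y$-coordinates chosen in a steeply decreasing hierarchy so that a single specific cycle dominates all others in the residual ``$\Delta$'' term (Lemmas~\ref{cycle-delta-lemma} and~\ref{cycle-delta-epsilon-lemma}). That forced cycle is the concatenation of two $y$-monotone paths and is therefore noncrossing. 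In short: rather than searching for a flip in an arbitrary polygon, the paper engineers the point set so that the unique longest cycle is, by design, flip-proof.
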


The ``sufficiently large'' condition in the question and conjecture \rv{cannot be dropped,}
as otherwise one can take any 3 points in general position, or any 4 points that are not in a convex position---for such point sets, all spanning paths and cycles are noncrossing.

In the other direction, one might wonder about maximizing the number of crossings in cycles. 
Let $C(n)$ be the largest number such that \rv{every} set of $n$ points in the plane admits a spanning cycle with at least $C(n)$ pairs of crossing edges. 
The following lower and upper bounds are known~\rv{\cite{Rebollar2024,Rebollar2015}}: $\frac{1}{12}\, n^2-O(n)<C(n)<\frac{5}{18}\, n^2-O(n)$. In other words, \rv{every} set of $n$ points in the plane admits a spanning cycle with at least $\frac{1}{12}\, n^2-O(n)$ crossings, and \rv{there are arbitrarily large point sets that do} not admit any cycle with more than $\frac{5}{18}\, n^2-O(n)$ crossings. 

\subsection{Our Contributions}
In this paper, we provide a negative answer to Question~\ref{path-question} and \rv{refute} Conjecture~\ref{cycle-conjecture}.  For \rv{every} integer $n\ge 1$, we present a set of $n$ points in the plane for which the longest spanning path is unique and noncrossing. Similarly, for \rv{every integer $n\ge 3$}, we present a set of $n$ points in the plane for which the longest spanning cycle is unique and noncrossing. To build such point sets, we use the following framework: First, we choose a set $P$ of points on the $x$-axis for which the longest structure may not be unique. Then, we assign new $y$-coordinates to points in $P$ to obtain a new point set $P'$ for which the longest structure corresponds to one in $P$ and is \rv{both} unique and noncrossing.

\rv{The paper is organized as follows. In Section~\ref{sec-DimensionOne} we prove structural properties of longest paths and cycles on the real line. We employ these properties and construct noncrossing longest paths in Section~\ref{path-section}, noncrossing longest cycles in Section~\ref{cycle-section}, and noncrossing longest matchings in Section~\ref{matching-section}. In Section~\ref{properties-sec}, we present some structural properties of longest paths and cycles in the plane, which may be of independent interest.}

\subsection{Preliminaries}
 {\color{mycolor}
All point sets considered in this paper are in the Euclidean plane.
A {\em geometric graph} is a graph with vertices represented by points and edges represented by line segments between the points.} Let $P$ be a finite point set. 
A {\em spanning path} for $P$ is a path drawn  with straight-line edges such that every point in $P$ lies at a vertex of the path and every vertex of the path lies at a point in $P$. A {\em spanning cycle} is defined analogously. In other words, a spanning path is a Hamiltonian path in the complete geometric graph on $P$, and a spanning cycle is a Hamiltonian cycle in this graph.

Consider two line segments, each connecting a pair of points in $P$. If the \rv{relative} interiors of the segments intersect, then we say that they {\em cross};  this configuration is called a {\em crossing}. A path or a cycle is called {\em noncrossing} if \rv{no two edges cross}. 
We say that an edge $e$ {\em intersects the $y$-axis} if the intersection of $e$ and the $y$-axis is not empty (the intersection could be an endpoint of $e$). 
{\color{mycolor}We denote the undirected edge between two points $p$ and $q$ by $pq$, the directed edge from $p$ towards $q$ by $(p,q)$, and the Euclidean distance between $p$ and $q$ by $|pq|$.}  \rv{The {\em length}  of an edge with endpoints $p$ and $q$ is the Euclidean distance between $p$ and $q$.} The {\em length} of a geometric graph $G$ is the sum of the lengths of its edges, and we denote it by $|G|$. \rv{In general, if $w:E\to \mathbb{R}$ is a weight function on the edges of a graph $G=(V,E)$, then the weight of a subgraph $H$ of $G$ is the sum of the weights of its edges: $w(H)=\sum_{e\in E(H)} w(e)$.}

\rv{
The following lemma and its corollary are simple but play an important role in our constructions for noncrossing longest paths and cycles.
\begin{lemma}\label{epsilon-lemma-revised}
    Let $G=(V,E)$ be a graph with two edge weight functions $w_1:E\to \mathbb{R}$ and $w_2:E\to \mathbb{R}$ such that for every $e\in E$, we have $|w_1(e)-w_2(e)|\leq 2\, \alpha$. 
    If $\mathcal{A}$ and $\mathcal{B}$ are two families of subgraphs of $G$ with $m$ edges such that $\beta:= \min\{ w_1(A)-w_1(B) : A\in \mathcal{A}, B\in \mathcal{B}\}>0$, then $\min\{w_2(A)-w_2(B) : A\in \mathcal{A}, B\in \mathcal{B}\}\geq \beta-4m\alpha$.
\end{lemma}
\begin{proof}	
    For every $A\in \mathcal{A}$ and $B\in \mathcal{B}$, 
    we have 
\begin{align*}
    w_2(A)-w_2(B) 
    &= \sum_{e\in E(A)} w_2(e) - \sum_{f\in E(B)} w_2(f)\\
    &\geq  \sum_{e\in E(A)} \big(w_1(e) - |w_2(e)-w_1(e)| \big)
        - \sum_{f\in E(B)} \big(w_1(f) + |w_2(f)-w_1(f)| \big)\\
    &\geq \sum_{e\in E(A)} w_1(e) - \sum_{f\in E(B)}w_1(f)  -4m\alpha\\
    &=w_1(A)-w_1(B)-4m\alpha.
\end{align*}
Consequently, we have $\min\{w_2(A)-w_2(B) : A\in \mathcal{A}, B\in \mathcal{B}\}\geq \beta-4m\alpha$.
\end{proof}
\begin{corollary}\label{epsilon-cor-revised}
    Let $P$ be a set of $n$ points in the plane and $\mathcal{S}$ a family of geometric graphs on $P$ with at most $m$ edges (for example, spanning cycles with $m=n$ or spanning paths with $m=n-1$). Let $\mathcal{S}=\mathcal{A}\cup \mathcal{B}$ be a partition such that $\mathcal{A}\neq \emptyset$ and $\beta:= \min\{|A|-|B| : A\in \mathcal{A}, B\in \mathcal{B}\}>0$. Assume that every point in $P$ is perturbed by a distance of at most $\alpha$, $0\le \alpha < \frac{\beta}{4m}$, and let $\mathcal{S}'$ be the set of perturbed graphs corresponding to $\mathcal{S}$. Then every longest graph in $\mathcal{S}'$ corresponds to a graph in $\mathcal{A}$. 
\end{corollary}
\begin{proof}	
We use Lemma~\ref{epsilon-lemma-revised} with the following setup: Let $G=(P,E)$ be the complete graph on $P$. For every edge $pq\in E$, let $w_1(pq)=|pq|$ and $w_2(pq)=|p'q'|$, where $p$ is perturbed to $p'$ and $q$ to $q'$. If every point is perturbed by a distance of at most $\alpha$, then for every edge $pq \in E$ the triangle inequality yields $w_2(pq)=|p'q'| \leq |pp'|+|pq|+|qq'|\leq w_1(pq)+2\alpha$ and $w_2(pq)=|p'q'| \geq |pq|-|pp'|-|qq'|\geq w_1(pq)-2\alpha$, hence $|w_2(pq)-w_1(pq)|\leq 2\alpha$.  

Now consider a longest graph $G'\in \mathcal{S}'$, corresponding to a graph $G\in \mathcal{S}$. We have $|G'|=w_2(G)$ and $|G|=w_1(G)$. If $G\notin \mathcal{A}$, then for every graph $A\in \mathcal{A}$, Lemma~\ref{epsilon-lemma-revised} gives $w_2(A)-w_2(G)\geq \beta-4\alpha m>0$,
which means that $G'$ is not a longest graph in $\mathcal{S}'$: a contradiction. We conclude that $G\in \mathcal{A}$, as required. 
\end{proof}
}

\section{Longest Paths and Cycles on the Real Line}
\label{sec-DimensionOne}
In this section we characterize the longest paths and cycles in dimension one. These observations play a pivotal role in our constructions in the plane (Sections~\ref{path-section} and~\ref{cycle-section}). We say that an edge $e$ {\em intersects a point $p$} if the intersection of $e$ and $p$ is not empty (the intersection could be an endpoint of $e$). For a sorted set of $2k{+}1$ numbers, the median is the number \rv{at position} $k{+}1$, and for a sorted set of $2k$ numbers, the median is the mean of the two numbers \rv{at positions} $k$ and $k{+}1$.

\begin{lemma}\label{endpoint-lemma} 
Let $P$ be a set of \rv{$n\geq 2$} points in $\mathbb{R}$, \rv{and let $H$ be a longest spanning path on $P$. Then.  
	\begin{itemize}\setlength{\itemindent}{2em}
		\item [$(i)$] the two endpoints cannot lie on strictly the same side of the median of $P$, and
		\item [$(ii)$] every edge of $H$ intersects the median of $P$.
	\end{itemize} 
}
\end{lemma}
\begin{proof}
Let $P=\{p_1,\ldots , p_n\}$ and assume \rv{without loss of generality} that 0 is the median of $P$\rv{. Note that $0\in P$ if $n$ is odd and $0\notin P$ if $n$ is even}. We prove both statements (i) and (ii) by contradiction. 

Suppose that (i) does not hold. Orient the edges of $H$ to make it a directed path. 
Let $p_s$ and $p_e$ be the starting and ending points of $H$, respectively. For the sake of contradiction, assume that $p_s$ and $p_e$ have the same sign, which we may assume, due to symmetry, to be positive, \rv{that is,} 
$p_s,p_e >0$. Then, the sum of degrees of vertices in $H$ to the left of the origin is 2 more than the sum of degrees of vertices to the right. Therefore, $H$ must have a directed edge $(p_{a}, p_{b})$ where \rv{$p_a,p_b \leq 0$}. 
If $p_b<p_a$, then by replacing 
\rv{the undirected edge $p_ap_b$ with the edge $p_bp_s$} 
we obtain a longer undirected path; and if $p_b>p_a$ by replacing 
$p_ap_b$ with $p_ep_a$ we obtain a longer undirected path. Both cases lead to a contradiction.

Suppose that (ii) does not hold.  Let $(p_a,p_b)$ be an edge of $H$ that does not intersect the median. Due to symmetry, \rv{we may} assume that $p_a,p_b<0$. Part~(i) implies that the sum of degrees on negative and positive vertices differ by at most one (and must be equal when $n$ is even). Thus $H$ must have an edge $(p_c,p_d)$ such that \rv{$p_c,p_d \geq 0$}. By replacing these edges with $p_ap_c$ and $p_bp_d$ we obtain an (undirected) spanning path that is longer than $H$ because $|p_a-p_c|+|p_b-p_d|>|p_a-p_b|+|p_c-p_d|$. This contradicts $H$ being a longest path. 
\end{proof}

\rv{We can characterize longest spanning paths on an even number of points.}
\begin{lemma}\label{path-lemma} 
Let $P$ be a set with an even number of points in $\mathbb{R}$. Let $H$ be a spanning path on $P$. Then $H$ is a longest spanning path if and only if
	\begin{itemize}\setlength{\itemindent}{2em}
		\item [$(i)$] every edge of $H$ intersects the median of $P$, and
		\item [$(ii)$] the two endpoints of $H$ are the two points closest to the median of $P$.
	\end{itemize} 
\end{lemma}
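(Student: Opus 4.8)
The plan is to write the length of any spanning path $H$ on $P$ as an integral and then optimize the integrand pointwise. Normalize so that the median of $P$ is $0$ (so $0\notin P$, and $P$ has $n/2$ negative and $n/2$ positive points), and relabel $P=\{q_1<\dots<q_n\}$; note that $q_{n/2}<0<q_{n/2+1}$ and that these are exactly the two points of $P$ closest to the median — they are equidistant from $0$ and every other point is strictly farther. For $t\notin P$, let $c_H(t)$ be the number of edges $pq$ of $H$ with $\min(p,q)<t<\max(p,q)$; then $|H|=\int_{\mathbb{R}}c_H(t)\,dt$, since each edge contributes exactly its own length. I would then bound $c_H(t)$ for a fixed $t$ having exactly $\ell$ points of $P$ below it, using only that in a path every vertex has degree at most $2$ and exactly two vertices have degree $1$: computing the degree sum over the $\ell$ points below $t$ in two ways (once as twice the number of $H$-edges among those points plus $c_H(t)$, once as $2\ell$ minus the number of $H$-endpoints among them) yields $c_H(t)\le 2\ell$ if $\ell<n/2$, $c_H(t)\le n-1$ if $\ell=n/2$, and $c_H(t)\le 2(n-\ell)$ if $\ell>n/2$; moreover equality forces that, when $\ell<n/2$, no $H$-edge and no $H$-endpoint lies entirely among the points below $t$ (symmetrically for $\ell>n/2$), and when $\ell=n/2$ that exactly one $H$-endpoint lies on each side of $t$. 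Integrating these bounds gives an upper bound $M=M(P)$ on $|H|$ depending only on $P$, so $H$ is longest iff $c_H(t)$ attains its bound for every $t\notin P$ simultaneously.

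It then remains to show that this simultaneous tightness is equivalent to (i) and (ii). For the ``if'' direction, assuming (i) every edge of $H$ joins a negative point to a positive one, so the colors alternate along $H$; hence for $t$ strictly between $q_{n/2}$ and $q_{n/2+1}$ all $n-1$ edges straddle $t$, and for $t$ with $\ell<n/2$ points below it the straddling edges are exactly those incident to $q_1,\dots,q_\ell$, each of which has degree $2$ by (ii) (since the unique negative endpoint is $q_{n/2}$), giving $c_H(t)=2\ell$; the case $\ell>n/2$ is symmetric, and the cases $\ell=0,n$ are trivial. Thus every pointwise bound is met and $H$ is longest. For the ``only if'' direction, if $H$ is longest, tightness at $t$ between $q_{n/2}$ and $q_{n/2+1}$ forces all $n-1$ edges to straddle $t$, i.e.\ every edge separates the negatives from the positives — which is exactly (i). Since (i) (or Lemma~\ref{endpoint-lemma}) forces one endpoint negative and one positive, tightness at $t$ just below $q_{n/2}$ excludes any $H$-endpoint among $q_1,\dots,q_{n/2-1}$ and so pins the negative endpoint to $q_{n/2}$, and symmetrically tightness at $t$ just above $q_{n/2+1}$ pins the positive endpoint to $q_{n/2+1}$; this is (ii).

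I expect the main obstacle to be handling the boundary value $\ell=n/2$ cleanly: the naive bound $2\ell=n$ there would exceed the total number of edges $n-1$, so one must use the two degree-$1$ vertices to show $c_H(t)\le n-1$ and that equality requires exactly one endpoint on each side of $t$ — and this constraint is precisely what later forces the endpoints to be $q_{n/2}$ and $q_{n/2+1}$ rather than merely ``on opposite sides of the median.'' A secondary, purely bookkeeping point is the degenerate range: for $n=2$ (and vacuously $n=0$) conditions (i) and (ii) hold automatically and the unique spanning path is trivially longest, whereas the interval arguments above implicitly assume $n\ge 4$ so that points such as $q_{n/2-1}$ and $q_{n/2+2}$ exist.
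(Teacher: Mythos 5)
Your proof is correct, and it takes a partly different route from the paper's. The sufficiency direction coincides with the paper's argument: both decompose the length as a sum over the gaps between consecutive points of (gap length)\,$\times$\,(number of straddling edges), and observe that conditions (i) and (ii) pin down all the multiplicities. But for necessity the paper uses local exchange arguments (if (i) fails, swap two edges on opposite sides of the median; if (ii) fails, reroute the offending endpoint), leaning on Lemma~\ref{endpoint-lemma} to guarantee the partner edge exists, whereas you instead prove a pointwise upper bound $c_H(t)\le\min\{2\ell,\,n-1,\,2(n-\ell)\}$ on the integrand by double-counting degrees, and read (i) and (ii) off the equality conditions. Your version is more unified (both directions fall out of one extremal computation, and you do not need Lemma~\ref{endpoint-lemma} separately, since tightness at the middle gap already forces one path-endpoint on each side), and it yields an explicit closed form for the maximum length; the paper's exchange argument is more local and is the template reused for the cycle lemma. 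Two small points, neither a real gap: your equivalence ``$H$ is longest iff all bounds are tight'' tacitly uses that the integrated upper bound $M(P)$ is actually attained, which follows from your ``if'' direction together with the (obvious, but worth one sentence) existence of a path satisfying (i) and (ii); and you correctly flag the degenerate case $n=2$, where the paper's proof is also vacuous.
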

\begin{proof}
Let $P=\{p_1,\ldots , p_n\}$ so that $p_i<p_j$ for all $i,j\in\{1,\ldots , n\}$, $i<j$, and assume \rv{without loss of generality} that 0 is the median of $P$. Note that $0\notin P$ since $n$ is even. First, we prove that if $H$ is a longest spanning path on $P$, then (i) and (ii) hold. Property (i) follows from Lemma~\ref{endpoint-lemma}(ii). We prove (ii) by contradiction. 

Suppose that (ii) does not hold: without loss of generality $p_{n/2}$ is not an endpoint of $H$. (The case for $p_{n/2+1}$ can be handled symmetrically).  Then $H$ has an endpoint $p_a$ with $a < n/2$. Orient the edges of $H$ so that the path is directed from $p_a$ towards the other endpoint. Let $(p_{n/2}, p_b)$ be the outgoing edge from $p_{n/2}$. By property~(i), we have $p_b>0$. By removing the undirected edge $p_{n/2} p_b$ we obtain two undirected paths, and $p_b$ is an endpoint on one of those paths. Next, join the paths with a new edge $p_ap_b$. Thus we obtain an spanning path that is longer than $H$ because $|p_a-p_b|>|p_{n/2}-p_b|$. This contradicts $H$ being longest.

Finally, we prove that every spanning path $H$ that satisfies (i) and (ii) is longest, using a direct proof. Consider a longest spanning path $L$ on $P$. By the necessity proof, (i) and (ii) hold for~$L$. This implies that the 
interval $[p_{n/2},p_{n/2+1}]$ is contained in each of the $n{-}1$ edges, hence it contributes to the length of~$L$ with multiplicity $n-1$. Similarly, for \rv{every} $i\in \{1,2,\ldots , n/2-1\}$ the 
intervals $[p_{n/2-i},p_{n/2-i+1}]$ and $[p_{n/2+i},p_{n/2+i+1}]$ each contribute to the length of $L$ with multiplicity $n+2-2i$. 
%
%
 See Figure~\ref{one-dimension-fig}. 
On the other hand, \rv{every} spanning path (including $H$) that satisfies (i) and (ii) receives the exact same multiplicities from the corresponding intervals. Therefore $H$ and $L$ have the same length, and hence $H$ is also a longest path.	
\end{proof}

\begin{figure}[!ht]
	\centering
	\setlength{\tabcolsep}{0in}
	\includegraphics[width=.65\columnwidth]{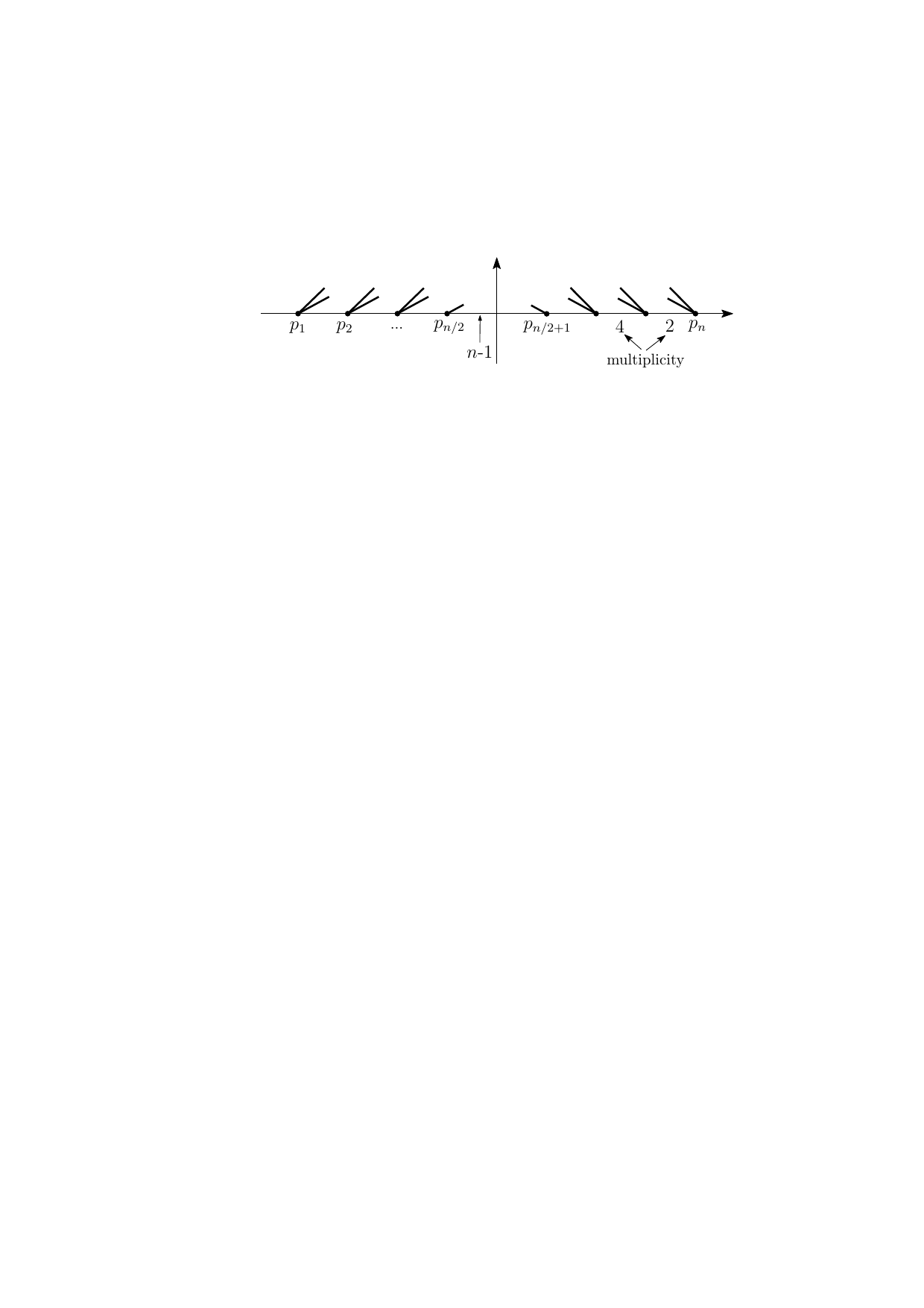}
	\caption{Illustration of a longest path for a point set on a line, for the case where the number of points, $n$, is even. Numbers below intervals $[p_{n/2+i},p_{n/2+i+1}]$ represent the multiplicity of the contribution of the corresponding intervals to the length of the longest path.}
		\label{one-dimension-fig}
\end{figure}

For an odd number of points, a similar argument yields essentially the same characterization as Lemma~\ref{path-lemma}. 

\rv{
\begin{lemma}
    \label{odd-path-cor}
    Let $P$ be a set with an odd number of points in $\mathbb{R}$. Let $H$ be a spanning path on $P$. Then $H$ is a longest spanning path if and only if
\begin{itemize}\setlength{\itemindent}{2em}
		\item [$(i)$] every edge of $H$ intersects the median of $P$, and
		\item [$(ii)$] one endpoint of $H$ is the median and the other endpoint is a closest point to the median.
	\end{itemize} 
\end{lemma}
\begin{proof}
Let $n=2k+1$ and $P=\{p_1,\ldots , p_n\}$ so that $p_i<p_j$ for all $i,j\in\{1,\ldots , n\}$, $i<j$. Note that the median of $P$ is $p_{k+1}$, and assume \rv{without loss of generality} that $p_{k+1}=0$. We prove that if $H$ is a longest spanning path on $P$, then (i) and (ii) hold. Property (i) follows from Lemma~\ref{endpoint-lemma}(ii). We prove (ii) by contradiction. 
	
First, suppose that the median $p_{k+1}=0$ is not an endpoint of $H$. By Lemma~\ref{endpoint-lemma}(i), the sum of degrees on (strictly) positive and negative vertices are the same. Consequently, the median is incident to a vertex on each side. Without loss of generality, $H$ contain the edges $(p_a, 0)$ and $(0,p_b)$ where $p_a<0<p_b$. Replace the undirected edges $p_a0$ and $0p_b$ with $p_a p_b$ and $p_e 0$, where $p_e$ is the endpoint of $H$. Since $|p_a|+|p_b|=|p_b-p_a|<|p_b-p_a|+|p_e|$, we obtain a spanning path longer than $H$, contradicting the maximality of $H$.

Second, suppose that the endpoints of $H$ are $p_a$ and $p_{k+1}=0$, but there is another point in $P$ between $p_a$ and $0$. Assume without loss of generality that $p_a<p_b<0$. Orient the edges of $H$ so that the path is directed from $p_a$ towards the other endpoint $0$. Let $(p_b, p_c)$ be the outgoing edge from $p_b$. By property~(i), we have $p_c\geq 0$. Replace the undirected edge $p_b p_c$ with $p_a \revm{p_c}$, to obtain a new spanning path on $P$, which is longer than $H$ since $|p_b-p_c|<|p_a-p_c|$, contradicting the maximality of $H$.

Finally, suppose that the endpoints of $H$ are $p_a$ and $p_{k+1}=0$, there is no point in $P$ between $p_a$ and $0$, but the closest point to $0$ is $p_b$ with $p_b\neq p_a$. Without loss of generality, assume $p_a<0<p_b$. Orient the edges of $H$ so that the path is directed from $p_a$ towards the other endpoint $0$. Let $(p_b, p_c)$ be the outgoing edge from $p_b$, and $(p_d,0)$ the last edge of $H$. By property~(i), we have $p_c\leq 0$ and $p_d>0$. 
If $p_b p_c = p_d 0$, then replace this edge $p_b p_c = p_d 0 = p_b 0$ with $p_a 0$ to obtain a new spanning path on $P$, which is longer since $|p_b|<|p_a|$.
Otherwise $p_b p_c\neq p_d 0$, which implies under our assumptions that $p_c<p_a<0<p_b<p_d$. In this case, replace the undirected edges $p_b p_c$ and $p_d 0$ with $p_a p_d$ and $p_c 0$, to obtain a new spanning path on $P$, which is longer than $H$ because $|p_b-p_c| + |p_d| = |p_a-p_c| + |p_a| + 2\, |p_b| + |p_d-p_b| < 
|p_a-p_c| + \revm{2\, |p_a|} + |p_b| + |p_d-p_b| = |p_a-p_d| + |p_c|$, contradicting the maximality of $H$.

The \revm{sufficiency} of conditions (i) and (ii) can be proved similarly to the proof of Lemma~\ref{path-lemma}: comparing the contributions of the intervals $[p_i,p_{i+1}]$, $i=1,\ldots ,n$, \revm{in} a longest spanning path $L$ on $P$, and a spanning path satisfying (i) and (ii). If $p_k$ and $p_{k+1}$ are the endpoints of $H$, then the intervals $[p_k, p_{k+1}]$ and $[p_{k+1},p_{k+2}]$ contribute to the length of $H$ with multiplicity \revm{$n-2$ and $n-1$}, respectively. In the special case that the median has two closest points in $P$ (that is, $|p_k|=|p_{k+2}|$), then
$p_{k+2}$ may be an endpoint of $H$, and then these intervals contribute \revm{$n-1$ and $n-2$}, respectively. For \rv{every} $i\in \{1,\ldots , k-1\}$, however, the intervals $[p_{k-i},p_{k-i+1}]$ and $[p_{k+i+1},p_{k+i+2}]$ each contribute with multiplicity $2(k-i)$. 
\end{proof}	
}

\rv{We can also characterize longest spanning cycles on $n$ points on a line.}

\begin{lemma}\label{cycle-lemma} 
Let $P$ be a finite set in $\mathbb{R}$. 
\begin{itemize}\setlength{\itemindent}{2em}
		\item [$(i)$] A spanning cycle on $P$ is longest \rv{if and only if}
  each of its edges intersects the median of~$P$. 
		\item [$(ii)$] If $P$ contains an odd number of points, then for \rv{every} longest spanning cycle the two edges incident to the median lie on opposite sides of it. 
   \item [$(iii)$] Assume that $P$ contains $n=2k{+}1$ points and there is an interval $I$ of length $h>0$ between the leftmost $k{+}1$ and the rightmost $k$ points. Then in \rv{every} longest spanning cycle, $n{-}1=2k$ edges contain the interval $I$; and if a spanning cycle has fewer than $2k$ edges that contain $I$, then it is at least $2h$ shorter than a longest cycle.
\end{itemize}
\end{lemma}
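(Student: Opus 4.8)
The plan is to derive all three parts from a single formula for the length of a spanning cycle in terms of the ``elementary intervals'' of $P$. Write $P=\{p_1<p_2<\dots<p_n\}$ and, for $1\le i\le n-1$, let $I_i=[p_i,p_{i+1}]$ and $\ell_i=p_{i+1}-p_i>0$. Since the segment $p_ap_b$ with $a<b$ is the concatenation of $I_a,\dots,I_{b-1}$, for any set $\mathcal F$ of segments with endpoints in $P$ one has $\sum_{e\in\mathcal F}|e|=\sum_{i=1}^{n-1}\mu_i(\mathcal F)\,\ell_i$, where $\mu_i(\mathcal F)$ counts the segments of $\mathcal F$ that contain $I_i$. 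I would first record a cut bound for a spanning cycle $C$: writing $L_i=\{p_1,\dots,p_i\}$ and $R_i=\{p_{i+1},\dots,p_n\}$, the number $\mu_i(C)$ equals the number of edges of $C$ crossing the partition $L_i\mid R_i$; counting the edge-endpoints inside $L_i$ gives $\mu_i(C)=2i-2e(L_i)$, where $e(L_i)$ is the number of edges of $C$ lying inside $L_i$, and symmetrically $\mu_i(C)=2(n-i)-2e(R_i)$. Hence $\mu_i(C)\le 2\min(i,n-i)$, this quantity is always even, and equality holds if and only if the smaller of $L_i,R_i$ induces no edge of $C$.

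Next I would put $M:=\sum_{i=1}^{n-1}2\min(i,n-i)\,\ell_i$, so that $|C|\le M$ for every spanning cycle $C$, and exhibit a spanning cycle $C^\ast$ with $|C^\ast|=M$. A ``zig-zag'' across the median works: when $n$ is even, alternate between the halves $\{p_1,\dots,p_{n/2}\}$ and $\{p_{n/2+1},\dots,p_n\}$; when $n=2k+1$, take the cycle $p_{k+1},p_1,p_{k+2},p_2,p_{k+3},\dots,p_k,p_{2k+1},p_{k+1}$. A direct check shows every $I_i$ is contained in exactly $2\min(i,n-i)$ edges of $C^\ast$. Therefore a spanning cycle $C$ is longest if and only if $|C|=M$, if and only if $\mu_i(C)=2\min(i,n-i)$ for every $i$; and since $L_1\subseteq L_2\subseteq\cdots$ and $\cdots\subseteq R_{n-2}\subseteq R_{n-1}$, this is equivalent to the single condition that the two blocks $A:=\{p_1,\dots,p_{\lfloor n/2\rfloor}\}$ and $B:=\{p_{\lceil n/2\rceil+1},\dots,p_n\}$ each induce no edge of $C$.

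For part (i), it remains to recognize this condition as ``every edge of $C$ intersects the median $m$ of $P$.'' If $n$ is even, $m$ lies strictly between $p_{n/2}$ and $p_{n/2+1}$, so an edge contains $m$ exactly when it has one endpoint in each of $A=\{p_1,\dots,p_{n/2}\}$ and $B=\{p_{n/2+1},\dots,p_n\}$, which is precisely independence of $A$ and $B$. If $n=2k+1$, then $m=p_{k+1}$, and an edge fails to contain $p_{k+1}$ exactly when it lies within $\{p_1,\dots,p_k\}$ or within $\{p_{k+2},\dots,p_{2k+1}\}$ — that is, when $A$ or $B$ is not independent. For part (ii), let $n=2k+1$ and let $C$ be longest; apply the equality condition at $i=k+1$. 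Since $\mu_{k+1}(C)=2k$ and $\mu_{k+1}(C)=2(k+1)-2e(L_{k+1})$, exactly one edge of $C$ lies inside $L_{k+1}=\{p_1,\dots,p_{k+1}\}$; if both edges of $C$ incident to $p_{k+1}$ went into $\{p_1,\dots,p_k\}$, then $L_{k+1}$ would contain at least two edges of $C$, a contradiction, and symmetrically they cannot both go into $\{p_{k+2},\dots,p_{2k+1}\}$, so the two edges at $p_{k+1}$ lie on opposite sides of it. For part (iii), the hypothesis is exactly that $I=I_{k+1}$ with $h=\ell_{k+1}$; since $\min(k+1,k)=k$, a longest cycle has $\mu_{k+1}=2k=n-1$ edges through $I$. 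If a spanning cycle $C$ has fewer than $2k$ edges through $I$, then by parity $\mu_{k+1}(C)\le 2k-2$, so $|C|=\sum_i\mu_i(C)\,\ell_i\le M-2h$, i.e.\ $C$ is at least $2h$ shorter than a longest cycle.

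The interval bookkeeping and the check that $C^\ast$ attains the bound are routine; I expect the only delicate point to be the equivalence in part (i) for odd $n$, where the median is a point $p_{k+1}\in P$ incident to two edges of $C$, so one must take care that ``an edge intersects the median'' (the segment contains $p_{k+1}$, possibly at an endpoint) matches independence of the two blocks $\{p_1,\dots,p_k\}$ and $\{p_{k+2},\dots,p_{2k+1}\}$, neither of which contains $p_{k+1}$ — which is exactly why part (ii) is a genuine extra assertion, supplied by the separate cut argument at $i=k+1$ above.
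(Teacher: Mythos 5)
Your proof is correct, and it takes a genuinely different route from the paper's. The paper establishes the ``only if'' direction of (i), part (ii), and the quantitative half of (iii) by local exchange arguments (locate two edges lying on the same side of the median, replace them by the two ``diagonals'' of their endpoints to lengthen the cycle), and invokes an interval-multiplicity count only for the ``if'' direction of (i). You instead derive everything from one global identity, $|C|=\sum_i \mu_i(C)\,\ell_i$, combined with the cut bound $\mu_i(C)=2i-2e(L_i)\le 2\min(i,n-i)$ (with $\mu_i$ automatically even) and a single zig-zag cycle certifying that $M=\sum_i 2\min(i,n-i)\,\ell_i$ is attained. This buys you a clean equivalence --- longest iff $\mu_i$ is maximal for every $i$ iff neither half induces an edge --- from which (i) and (ii) fall out by reading the equality conditions at the cuts $i=k$ and $i=k{+}1$, and it gives the $2h$ deficit in (iii) immediately from parity ($\mu_{k+1}$ even and below $2k$ forces $\mu_{k+1}\le 2k-2$), where the paper needs a separate exchange producing a cycle traversing $I$ two more times. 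The trade-off is that the paper's exchange arguments are local and reusable across its Lemmas~\ref{endpoint-lemma},~\ref{path-lemma} and~\ref{cycle-difference-lemma}, whereas your argument yields the exact maximum length explicitly. The two steps you defer as routine --- checking that the zig-zag cycles realize $\mu_i=2\min(i,n-i)$ for every $i$, and translating ``intersects the median'' into independence of the two blocks in the odd case --- are indeed straightforward, and you correctly flag and resolve the one delicate point (the median $p_{k+1}$ belonging to $P$ and being incident to two cycle edges).
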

\begin{proof}
Let $P=\{p_1,\ldots , p_n\}$ so that $p_i<p_j$ for all $i<j\in\{1,\ldots , n\}$, and assume \rv{without loss of generality} that 0 is the median of $P$.
Note that $0\notin P$ if $n$ is even, and $p_{\lceil n/2\rceil}=0$ if $n$ is odd.

First we prove the \revm{necessity} of (i) by contradiction. Let $C$ be a longest cycle on $P$, and orient its edges to obtain a directed cycle. Suppose, for the sake of contradiction, that the edge $(p_a,p_b)$ of $C$ does not intersect the median. We may assume \rv{without loss of generality} that $p_a,p_b<0$. The sum of vertex degrees  strictly on the left and right side of the median are the same, and the edges that contain $0$ in their interior contribute 1 to both sums. Consequently, either $C$ contains an edge $(p_c,p_d)$ with $p_c,p_d>0$; or (when $n$ is odd) the two edges incident to the median, say $(p_c,0)$ and $(0,p_d)$ satisfy $p_c,p_d>0$. 
In the first case, we can replace the undirected edges $p_a p_b$ and $p_c p_d$
with $p_a p_c$ and $p_b p_d$. 
In the second case, replace $p_a p_b$ and $p_c 0$ with $p_a p_c$ and $p_b 0$. In both cases, we obtain a longer (undirected) spanning cycle, contradicting the maximality of $C$. 

The \revm{sufficiency} of (i) can be proved by a counting argument similar to that of Lemma~\ref{path-lemma}.

Next, we prove (ii) by contradiction. 
Suppose that the median, $0\in P$, is incident to the edges $(p_c,0)$ and $(0,p_d)$ with $p_c,p_d>0$. \rv{As the sum of vertex degrees  strictly on the left and right side of the median are the same,} there is \rv{an edge of $C$ strictly to the left of $0$; this contradicts part (i).} 


To prove the first claim in (iii), note that if $n=2k{+}1$, then the median is the $(k{+}1)$-st point in $P$, that we denote by $p_0$. Let $C$ be a longest cycle on $P$. 
Now (i) and (ii) imply that exactly one edge of $C$ (which is incident to $p_0$) does not contain $I$. The remaining $n{-}1=2k$ edges contain $I$.

For the second claim in (iii), let $C$ be a spanning cycle on $P$ in which fewer than $2k$ edges contain $I$. Orient the edges of $C$ to obtain a directed cycle. The sum of degrees of the leftmost $k{+}1$ (\rv{respectively}, rightmost $k$) vertices is $2k{+}2$ (\rv{respectively}, $2k$), and the edges containing $I$ have fewer than $2k$ left (\rv{respectively}, right) endpoints. Consequently, the leftmost $k{+}1$ (\rv{respectively}, rightmost $k$) points in $P$ induce at least two edges (\rv{respectively}, one edge) of $C$. Therefore, $C$ contains two edges, $(p_a,p_b)$ and $(p_c,p_d)$, such that $p_a,p_b$ are to the left of $I$ and \revm{$p_c,p_d$} are to the right of $I$. We can replace these two edges
with $(p_a,p_c)$ and $(p_b,p_d)$, to obtain a spanning cycle $C'$ that traverses $I$ two more times than $C$. In particular, we have $|C'|\geq |C|+2\, |I|=|C|+2h$,
hence $|C|\leq |C'|-2h\leq |C_{\max}|-2h$, where $C_{\max}$ is a longest cycle on $P$.
\end{proof}

\section{Noncrossing Longest Paths}
\label{path-section}

Let $n\ge 1$ be an integer. In this section, we construct 
$n$ points for which the longest spanning path is unique and noncrossing. This can be easily observed for $n<5$: For example, for $n=4$, \rv{every} spanning path of the vertices of a triangle and a point in the interior is noncrossing.  Thus, we will now assume that $n\ge 5$. 
{\color{mycolor} 
In Section~\ref{sec-DimensionOne}, we uncovered 
some structural properties of longest paths for $n$ points 
on a line. Here we show how to construct a 2-dimensional point set starting with $n$ points on the $x$-axis and then  assigning $y$-coordinates to the points.
We show that} the longest path is unique and noncrossing. We describe our construction for the case where $n$ is even; the construction for the case where $n$ is odd follows with some minor changes. The following theorem summarizes the main result in this section. 

\begin{theorem}\label{thm:path}
    For every integer $n\ge 1$ there exists a set of $n$ points in the plane for which the longest spanning path is unique and noncrossing. 
\end{theorem}

\old{
{\color{mycolor}In Section~\ref{even-path-overview} we give an overview of our construction for an even number of points. The details and proofs are given in Section~\ref{even-path-details}. The case of odd paths is considered in Section~\ref{odd-path}.}
}

\rvm{
In Section~\ref{even-path-details} we prove a generalization of Theorem~\ref{thm:path}. The case of odd paths is handled in Section~\ref{odd-path}.
}

\old{
\subsection{A Path with an Even Number of Points: An Overview}
\label{even-path-overview}

For $k\geq 3$, consider a set $P$ of $n=2k$ points $p_i$ on the $x$-axis such that $p_1=(0,0)$ and $p_i=(i,0)$ for $ i=-1,\pm2,\dots,\pm k$, as illustrated in Figure~\ref{path-even-fig}(a). 
The longest spanning path for this point set is not unique. In fact, {\color{mycolor}Lemma~\ref{path-lemma} implies that} \rv{every} spanning path with endpoints $p_1$ and $p_{-1}$ and with all edges crossing the $y$-axis is a longest path. {\color{mycolor} Conversely, \rv{every} longest path must have endpoints $p_1$ and $p_{-1}$, and its edges must cross the $y$-axis.} 
Let $\cal H$ be the set of these paths. 
Let $P'$ be the point set obtained by assigning to each point $p_i$ a $y$-coordinate $y_i$ such that, as illustrated in Figure~\ref{path-even-fig}(b), the following holds:
\[\frac{1}{8k}=y_1\gg y_{-2}\gg y_2\gg y_{-3}\gg y_3\gg \cdots\gg  y_{-k}\gg y_k\gg y_{-1}=0.\]

The value $y_1$ is much larger than $y_{-2}$, which is in turn much larger than $y_2$ and so on. Notice that the largest $y$-coordinate $y_1$ is \rv{$\frac{1}{8k}$} which is much smaller than $1$. Due to the small $y$-coordinates, a longest path $H'$ on $P'$ corresponds to a path $H\in \cal H$. The length of $H'$ is roughly the length of $H$ plus a very small value \rv{$Y(H')$}, which depends on the new $y$-coordinates. Let $e_1$ be the only edge of $H'$ incident to $p_1$. Since $p_1$ has a very large $y$-coordinate compared to other points, the contribution of $e_1$ to \rv{$Y(H')$} is larger than the contribution of other edges. The contribution of $e_1$ is maximized if it connects $p_1$ to the nearest plausible neighbor, which is $p_{-2}$; this can be observed from Figure~\ref{path-even-fig}(b).  Therefore $e_1=p_1p_{-2}$. By a similar argument, $p_{-2}$ gets connected to $p_2$, and so on. It follows that the path $H'$ is unique and it is $p_1,p_{-2},p_2,p_{-3},p_3,\dots, p_{-k},p_k,p_{-1}.$
This path is $y$-monotone, and hence noncrossing; see Figure~\ref{path-even-fig}(b).
}

\subsection{A Path with an Even Number of Points}
\label{even-path-details}

\rv{
In this section, we prove a stronger statement than Theorem~\ref{thm:path} for even $n$, and show that \revm{a} set $P$ of an even number of points on the $x$-axis can be perturbed, by slightly increasing their $y$-coordinates, to a point set $P'$ for which the longest spanning path is unique and $y$-monotone (hence noncrossing).

\begin{lemma}\label{evenpath-lem+}
 For every even integer $n\ge 2$ \revm{and} every set $P$ of $n$ real numbers, 
 there exists a set $P'$ of $n$ points in the plane with the following properties:
\begin{enumerate}
    \item the $x$-projection of $P'$ is $P$;
    \item the $x$-projection of \rv{every} longest path on $P'$ is a longest path on $P$;
    \item the longest spanning path on $P'$ is unique and noncrossing.
\end{enumerate}
\end{lemma}

\revm{The rest of this subsection is dedicated to the proof of Lemma~\ref{evenpath-lem+}}. The statement is trivial for $n=2$, so we may assume $n\geq 4$.
Let $P=\{x_{-1},x_1,x_{-2},x_2,\ldots , x_{-k},x_k\}$ for some integer $k=n/2$ such that $x_i<x_j$ if and only if $i<j$. Furthermore, we may assume without loss of generality that the median of $P$ is 0. Since $n$ is even, then $0\notin P$ and we have $x_i<0$ if and only of $i<0$. 

We construct the point set $P'=\{p_{-1},p_1,p_{-2},p_2,\ldots , p_{-k},p_k\}$ where $p_i=(x_i,y_i)$ for some $y_i\geq 0$ to be specified later. In particular, the $y$-coordinates will satisfy  
\[
    y_1> y_{-2}> y_2> y_{-3}> y_3> \cdots >y_{-k}> y_k> y_{-1} = 0.
\]
\rev{First we choose $y_1$.} For every $Q\subseteq P$ \rev{with $|Q|\ge 3$}, let $\beta(Q)$ be the difference between the length of the longest and the second longest spanning paths on $Q$; and let $\beta=\rev{\min}_{Q\subseteq P} \beta(Q)$. We set $y_1=\frac{\beta}{8k}$. 
Since $P$ is a set of integers, $\beta\geq 1$. 
Then Corollary~\ref{epsilon-cor-revised} implies that the \rev{$x$-projection of every longest path on $P'$ is a longest path on $P$}.
We choose the remaining $y$-coordinates in $k-1$ iterations and maintain the following invariant for $i\in \{1,\ldots , k\}$. 

\begin{invariant*}[$M_i$]
There exist real numbers $\frac{\beta}{8k}=y_1> y_{-2}> y_2>  \cdots >y_{-i}> y_i> y_{-1} = 0$ \rev{such that} for each $i\in \{1, \ldots, k\}$, every longest spanning path on $P'$ \rev{starts with} 
    $p_1 ,p_{-2} ,p_2 ,\dots ,p_{-i} ,p_i$ as a subpath.
\end{invariant*}

The Invariant~$(M_1)$ vacuously holds as the 1-vertex path $p_1$ is a subpath of every spanning path on $P'$.}

\begin{figure}[!ht]
	\centering
	\setlength{\tabcolsep}{0in}
	\includegraphics[width=.65\columnwidth]{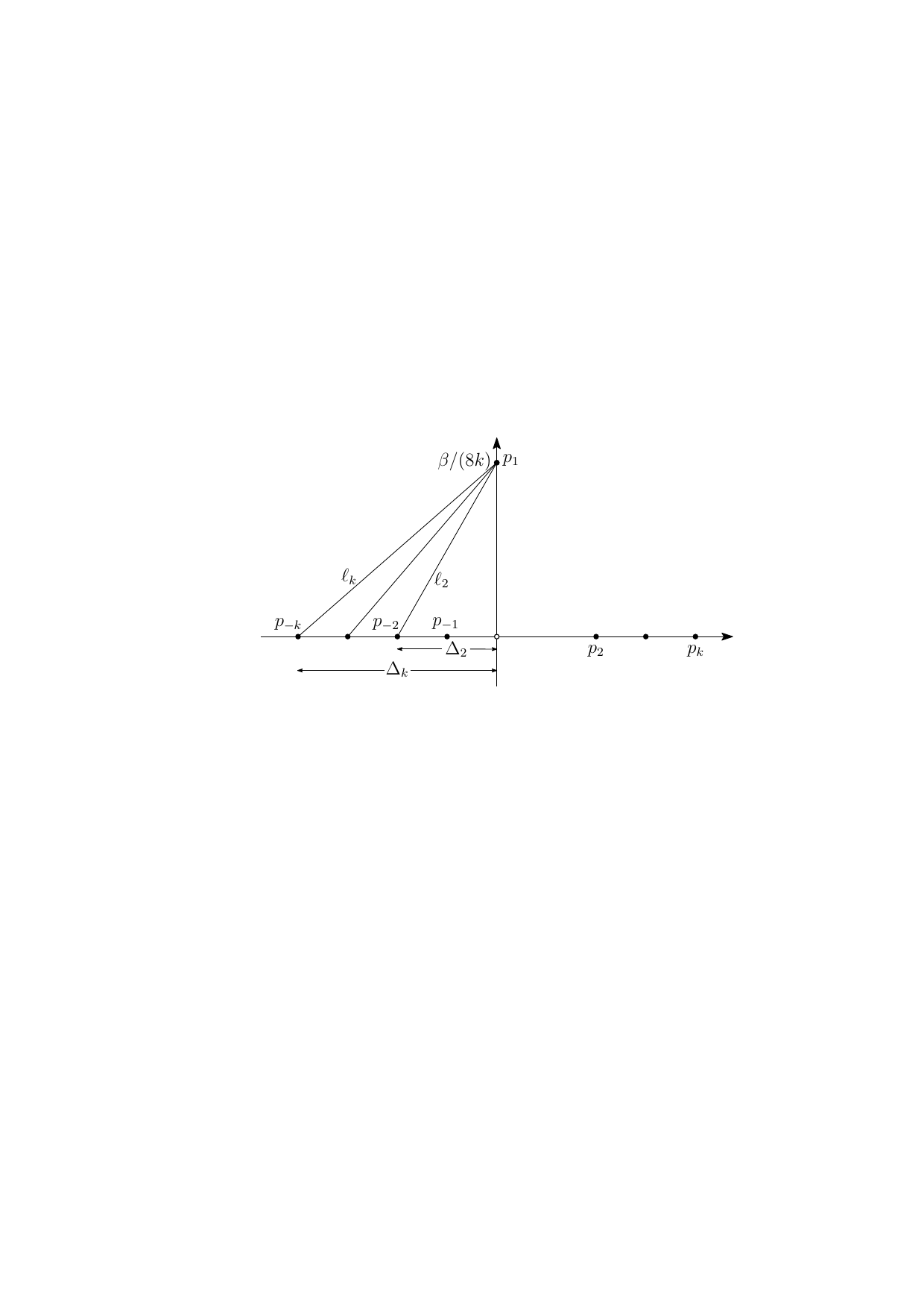}\\
	(a)\\
	\vspace{8pt}
	\includegraphics[width=.65\columnwidth]{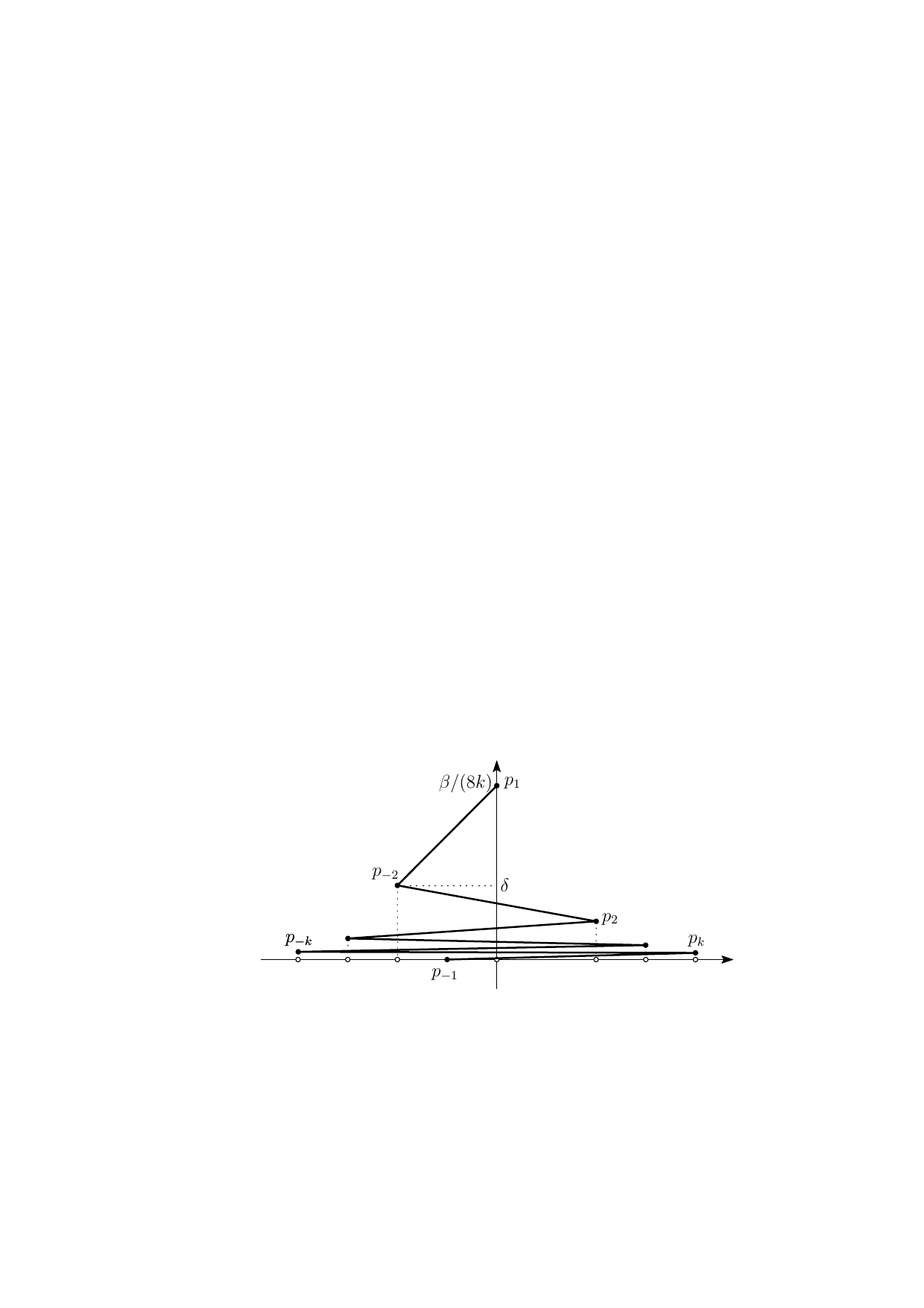}\\
	(b)
	\caption{Illustration of the construction of a longest path for $2k$ points. The figure is not \rv{true} to scale as the real $y$-coordinates are very small so that the points lie almost on the $x$-axis. (a) Lifting $(x_1,0)$ to the $y$-coordinate \rvm{$\frac{\beta}{8k}$}. (b) The final longest path.}
		\label{path-even-fig}
\end{figure}

\vspace{10pt}
\noindent{\bf Note.~~}{Figures~\ref{path-even-fig}(a) and \ref{path-even-fig}(b) are not \rv{true} to scale. The $y$-coordinates should be small enough so that all points lie almost on the $x$-axis (we exaggerated the $y$-coordinates to facilitate readability).  
}

\rvm{
\paragraph{Overview.} \revm{The rough idea of the remaining proof is as follows.} Due to the small $y$-coordinates, a longest path $H'$ on $P'$ corresponds to a \revm{longest path $H$ on $P$.} The length of $H'$ is roughly the length of $H$ plus a very small value $Y(H')$, which depends on the new $y$-coordinates. Let $e_1$ be the only edge of $H'$ incident to $p_1$. Since $p_1$ has a very large $y$-coordinate compared to other points, the contribution of $e_1$ to $Y(H')$ is larger than the contribution of other edges. The contribution of $e_1$ is maximized if it connects $p_1$ to the nearest plausible neighbor, which is $p_{-2}$; this can be observed from Figure~\ref{path-even-fig}(b).  Therefore, $e_1=p_1p_{-2}$. By a similar argument, $p_{-2}$ gets connected to $p_2$, and so on. It follows that the path $H'$ is unique and it is $p_1,p_{-2},p_2,p_{-3},p_3,\dots, p_{-k},p_k,p_{-1}.$ This path is $y$-monotone, and hence noncrossing; see Figure~\ref{path-even-fig}(b). We continue with details on how to choose the remaining $y$-coordinates.}

\rv{
\begin{lemma}\label{lem:invariant1}
If Invariant~$(M_1)$ holds, then we can choose $y_{-2}$ and $y_2$ such that $y_1>y_{-2}>y_2>0$ and Invariant~$(M_2)$ holds. 
\end{lemma}
\begin{proof}
Since Invariant~$(M_1)$ holds, every longest spanning path of $P'$ starts with $p_1$. 
We choose $y_{-2}$ and $y_{2}$ such that $y_1>y_{-2}>y_{2}>0$,
and  that every \revm{longest} spanning path of $P'$ starts with the subpath $p_1 ,p_{-2}, p_2$. 

Let $H'$ be a longest spanning path on $P'$. Notice that $H'$ corresponds to a path $H$ on $P$; denote by $\mathcal{H}$ the set of all longest spanning paths on $P$. By the triangle inequality and our constraints on the $y$-coordinates, we have $|H'|=|H|+Y(H')$ for some $Y(H')\geq 0$ that depends on the new $y$-coordinates. In particular, $Y(H')\le |P|\cdot\frac{\beta}{8k}$.

By Lemma~\ref{path-lemma}, $x_1$ is an endpoint of every path in $\mathcal{H}$. By Lemma~\ref{endpoint-lemma}(ii), the edge of $H$ incident to $x_1$ intersects the median of $P$, and so this edge is $x_1x_{-j}$ for some $j\in \{2,\ldots ,k\}$. For $j\in\{2,\dots,k\}$ let $\Delta_{j}=|x_1- x_{-j}|$ and let $\ell_j$ \revm{be} the Euclidean distance between points $p_1$ and $(x_{-j},0)$ in $\mathbb{R}^2$; as in Figure~\ref{path-even-fig}(a). 
The contribution of $p_{1}p_{-j}$ to $|H'|$ is $|p_1 p_{-j}|$\revm{.}
On the one hand, we have $|p_1 p_{-j}|\geq \ell_j - y_{-j} \geq \ell_j-y_\rev{-2}$ by the triangle inequality and the assumption $y_{-2}\geq y_{-j}$. On the other hand, we have  $|p_1 p_{-j}|\leq \ell_j$. The contribution of the corresponding edge $x_{1} x_{-j}$ to $|H|$ is exactly $\Delta_j$. Hence the contribution of $p_{1} p_{-j}$ to $Y(H')$ is at least $\ell_j-y_{\rev{-2}}-\Delta_j$ and at most $\ell_j-\Delta_j$. The triangle inequality (for the triangle $p_{1} p_{-j}p_{-(j+1)}$) yields $\ell_{j+1}<\ell_{j}+(\Delta_{j+1}-\Delta_j)$.
Consequently, we have 
\[
     \ell_{2}-	\Delta_{2}
    >\ell_{3}-\Delta_{3}
    >\dots 
    > \ell_{k}-	\Delta_{k}.
\]
    If we set $y_\rev{-2} < (\ell_{2}-\Delta_{2})-(\ell_{3}-\Delta_{3})$, 
    then the contribution of $p_{1}p_{-2}$ to $Y(H')$ is at least 
\[
    \ell_{2}-y_{\rev{-2}}-\Delta_{2}
    >\ell_{2}-\Delta_{2}-((\ell_{2}-\Delta_{2})-(\ell_{3}-\Delta_{3}))
    =\ell_{3}-\Delta_{3},
\] 
which is larger than the contribution of every other plausible edge $p_1p_{-j}$. Since the $y$-coordinates of all other points are less than $y_{\rev{-2}}$, every other edge of $H'$ contributes less than $y_{\rev{-2}}$ to $Y(H')$.
By setting 
\[
    y_\rev{-2}=\frac{(\ell_2-y_{\rev{-2}}-\Delta_2)-(\ell_{3}-\Delta_{3})}{2k-1},
\] 
the contribution of $p_{1}p_{-2}$ exceeds the sum of the contributions of the remaining $2k-2$ edges of $H'$. After rearrangement, we get
\[y_{\rev{-2}}=\frac{(\ell_2-\Delta_2)-(\ell_3-\Delta_3)}{2k}.\] 
Thus, for this choice of $y_\rev{-2}$ the longest path $H'$ contains the edge $p_1 p_{-2}$. Note also that $\ell_2<\Delta_2+y_1$ by the triangle inequality, which gives $y_\rev{-2}<\ell_2-\Delta_2<y_1$, as required. 

By Lemma~\ref{path-lemma}, $x_{-2}$ is not an endpoint of $H$.\footnote{Note that $P\setminus \{x_1\}$ contains an odd number of points, but we do not use the characterization of longest paths on odd sets (Lemma~\ref{odd-path-cor}). In fact, if we remove $x_1$ from $H$, we do not necessarily obtain a longest path on $P\setminus \{x_1\}$.}
By Lemma~\ref{endpoint-lemma}(ii), every edge of $H$ intersects the median of $P$. Consequently, $x_{-2}$ is adjacent to $x_1$ and some point $x_j$, $j\in \{2,\ldots, k\}$. Similarly to the choice of $y_\rev{-2}$, we can choose $y_\rev{2}$ so that the longest path $H'$ contains the edge $p_{-2}p_2$ and $0<y_\rev{2}<y_\rev{-2}$. This establishes Invariant~$(M_2)$.
\end{proof}
\begin{lemma}\label{lem:invariant2}
For every $i\in \{1,\ldots , k-1\}$, if Invariant~$(M_i)$ holds, then we can choose $y_{-(i+1)}$ and $y_{i+1}$ such that $y_i>y_{-(i+1)}>y_{i+1}>0$ and Invariant~$(M_{i+1})$ holds. 
\end{lemma}
\begin{proof}
The case $i=1$ is handled in Lemma~\ref{lem:invariant1}.
In general, assume that Invariant~$(M_i)$ holds for some $i\in \{2,\ldots , k-2\}$. Then the longest spanning path $H'$ on $P'$ \rev{starts with} the subpath $p_1,p_{-2},p_2,\ldots ,p_{-i}, p_i$. Let $\widehat{P}$ be the set of real numbers $x_{j}\in P$ with $|j|\geq i$; and let $\widehat{P}'$ be the corresponding subset of $P'$. Note that $\widehat{P}$ contains an even number of points, and the two closest points \revm{to} its median are $p_i$ and $p_{-1}$. 
Recall that $y_i<y_1=\frac{\beta}{8k}$.  By the choice of $\beta$, Invariant~$(M_1)$ holds for $\widehat{P}$ with the current value of $y_i$. In particular, $p_i$ and $p_{-1}$ are the endpoints of every longest spanning path on $\widehat{P}'$. Lemma~\ref{lem:invariant1} yields 
$y_{-(i+1)}$ and $y_{i+1}$ such that $y_i>y_{-(i+1)}>y_{i+1}>0$ and
every longest spanning path of $\widehat{P}'$ contains the subpath $p_i ,p_{-(i+1)}, p_{i+1}$. Consequently, every longest spanning path of $P'$ contains the subpath $p_1,p_{-2},p_2,\dots ,p_i,p_{-(i+1)},p_{i+1}$.

Finally, for $i=k-1$, we can choose $y_{-k}$ and $y_k$ arbitrarily so that  $y_{k-1}>y_{-k}>y_k>y_{-1}=0$. Invariant~$(M_{k-1})$ implies that $p_1$ and $p_{-1}$ are the endpoints of every longest spanning path $H'$ on $P'$. Therefore, $p_{k-1}$ is not an endpoint of $H'$, but its adjacent edges intersect the $y$-axis by Lemma~\ref{endpoint-lemma}(ii). 
There is only one remaining plausible edge from $p_{k-1}$ and $p_{-k}$: $p_{k-1} p_{-k}$ and $p_{-k} p_k$, respectively.  
\end{proof}

\begin{proof}[Proof of Lemma~\ref{evenpath-lem+}]
By Lemma~\ref{lem:invariant2}, Invariant~$(M_k)$ holds. This implies that there are $y$-coordinates for all points in $P'$ such that every longest path $H'$ on $P'$ corresponds to a longest path on~$P$. Furthermore, $H'$ contains $p_1, p_{-2} ,p_2,\dots ,p_{-k} ,p_k$ as a subpath. 
By Lemma~\ref{endpoint-lemma}(i), $p_1$ and $p_{-1}$ are the endpoints of $H'$. It now follows that $H' = p_1 ,p_{-2} ,p_2,\dots , p_{-k}, p_k, p_{-1}$. Since this path is $y$-monotone, it is noncrossing.
\end{proof}

\subsection{A Path with an Odd Number of Points}
\label{odd-path}

In this subsection, we extend Lemma~\ref{evenpath-lem+} to odd point sets, by reducing it to the even case. 
\begin{lemma}\label{oddpath-lem+}
 For an odd integer $n\ge 3$, let $P$ be a set of $n$ real numbers
 such that the median of $P$ has a unique closest point in $P$.
 Then there exists a set $P'$ of $n$ points in the plane with the following properties:
\begin{enumerate}
    \item the $x$-projection of $P'$ is $P$;
    \item the $x$-projection of every longest path on $P'$ is a longest path on $P$;
    \item the longest spanning path on $P'$ is unique and noncrossing.
\end{enumerate}
\end{lemma}
\begin{proof}
 Assume that $n=2k-1$ for some $k\in \mathbb{N}$ and $P=\{x_{-k},x_{-(k-1)}\ldots ,x_k\}$, where $i<j$ implies $x_i<x_j$. Note that $x_0$ is the median of $P$. We may assume without loss of generality that $x_0=0$,
 and $|x_{-1}|<|x_{1}|$ by symmetry. 
 Let $x_{1/2}=\frac12\cdot x_1$ and $Q=P\cup \{x_{1/2}\}$.
 Note that $Q$ contains an even number of points and the closest points to its median are $x_0$ and $x_{1/2}$. By Lemma~\ref{evenpath-lem+}, we can assign $y$-coordinates to the points in $Q$ to obtain a point set $Q'$ for which the longest spanning path $H'$ is unique, $y$-monotone, and its endpoints are $p_0$ and $p_{1/2}=(x_{1/2},y_{1/2})$. By the choice of $y$-coordinates, the longest path on every subset of $Q'$ corresponds to a longest path on $Q$. We may further assume that $p_{1/2}$ has the maximum $y$-coordinate in $Q'$, and $p_{1/2} p_{-1}$ is an edge of $H'$. 

Let $P'=Q'\setminus \{p_{1/2}\}$, the subset of $Q'$ corresponding to $P$.  
Since $x_{1/2}$ is an endpoint of $H'$, if we remove $x_{1/2}$ from $H'$, we obtain a spanning path $H''$ on $P'$, with endpoints $p_0$ and $p_{-1}$. We claim that $H''$ is a longest spanning path on $P'$. Suppose, for the sake of contradiction, that $P'$ admits a path $L$ that is longer than $H''$.  
By our choice of $y$-coordinates, the $x$-projection of both $H''$ and $L$ is a longest path on~$P$. By Lemma~\ref{odd-path-cor}, the endpoints of both $H''$ and $L$ are $p_{-1}$ and $p_0$. Consequently, the concatenation of $L$ and the edge $p_0 p_{1/2}$ gives a spanning path on $Q'$: its length is $|L|+|p_0 p_{1/2}|>|H'|+|p_0 p_{1/2}|=|H'|$, contradicting the maximality of $H'$. 
 \end{proof}
}

\section{Noncrossing Longest Cycles}
\label{cycle-section}
Let $n\ge 3$ be an integer. In this section, we construct a set of $n$ points for which the longest spanning cycle is unique and noncrossing. For $n=3$, every spanning cycle is noncrossng. For $n=4$, we take three vertices of a triangle and a point in the interior. Thus, we assume that $n\ge 5$. 

\begin{theorem}\label{evenpath-thm}
    For every integer $n\ge 3$ there exists a set of $n$ points in the plane for which the longest spanning cycle is unique and noncrossing. 
\end{theorem}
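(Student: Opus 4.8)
The plan is to mirror the structure of the path construction from Section~\ref{path-section}, replacing Lemma~\ref{path-lemma} by Lemma~\ref{cycle-lemma} as the one-dimensional characterization. We will handle the even and odd cases separately. We only need to treat $n\ge 5$, since small cases are noted already.

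\textbf{Even case ($n=2k$).} Start with $2k$ points on the $x$-axis placed symmetrically about the origin, say $p_i=(i,0)$ for $i=\pm1,\dots,\pm k$; here $0$ is the median. By Lemma~\ref{cycle-lemma}(i), a spanning cycle on $P$ is longest iff every edge crosses the $y$-axis, i.e., iff the cycle alternates between the $k$ negative points and the $k$ positive points. The length of every such cycle is the same integer, and every non-longest cycle is at least $1$ shorter (each edge length is an integer, and a non-alternating cycle wastes at least one unit). Let $\mathcal H$ be the set of these alternating cycles. Now I would assign $y$-coordinates $y_i$, all at most $\tfrac{1}{8k}$, arranged in a strict hierarchy of magnitudes $y_{\sigma(1)}\gg y_{\sigma(2)}\gg\cdots$ for a suitable ordering $\sigma$ of the indices (with the largest, $y_1$, serving the role that $p_1$ played for paths), so that a perturbation lemma analogous to Lemma~\ref{epsilon-lemma} forces any longest cycle $C'$ on $P'$ to lie in $\mathcal H$, with $|C'|=|C|+\Delta(C')$ for a tiny extra term. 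Then, exactly as in Lemma~\ref{delta-lemma}, the edge incident to the highest point $p_1$ must go to its nearest admissible neighbor on the opposite side, then the argument cascades: choosing each $y_i$ inductively small enough relative to the gaps $\ell_j-\Delta_j$, the two edges at each successive point are forced one at a time, until the cycle is pinned down uniquely. The resulting cycle should be chosen so that it is noncrossing — e.g. the convex-position-like zig-zag $p_1,p_{-2},p_2,p_{-3},\dots,p_{-k},p_k$ closed up, or a cycle that is $x$-monotone in the appropriate sense on the upper and lower chains; one must check that the forced cycle is indeed this noncrossing one, which is where the exact placement of the $y$-hierarchy matters.

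\textbf{Odd case ($n=2k+1$).} I would reuse the even construction and adjoin one extra point, in the spirit of Section~\ref{odd-path}. Concretely: take the even point set $P'$ on $2k$ points with its unique noncrossing longest cycle $C'$, and insert a new point $q$ that is ``far'' in a direction that makes it an extreme point, at a location forcing it to be spliced into $C'$ between a prescribed pair of consecutive vertices. Using Lemma~\ref{cycle-lemma}(ii)--(iii): with $q$ playing the role of the median (so $n=2k+1$ and there is a gap $I$ on one side), any longest cycle must have $2k$ edges crossing that gap and the two edges at the median on opposite sides. One argues, by a swap argument like the one ending Section~\ref{odd-path}, that removing $q$ from any longest cycle on the $(2k+1)$-point set leaves a longest cycle on the $2k$ points — hence that cycle is $C'$, and $q$ is attached via the pair of edges of maximum possible length to the two points flanking the removed edge, which is unique. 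Noncrossingness is inherited: $C'$ was noncrossing and $q$ is placed in convex position relative to the splice point so the two new edges cross nothing.

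\textbf{Main obstacle.} The delicate part is verifying that the \emph{uniquely forced} longest cycle is the \emph{noncrossing} one — for paths this was automatic because the forced path turned out $y$-monotone, but a cycle must close up, so one edge necessarily ``returns''. I would need to order the $y$-hierarchy so that this closing edge is the one between the two points nearest the median (the analogues of $p_1,p_{-1}$ in the path case play opposite roles here), so that the cycle is the boundary of a convex-ish polygon and hence noncrossing; getting this ordering right, and confirming that the cascade of forced edges genuinely terminates in that single cycle rather than an alternative alternating cycle of equal combinatorial type, is the crux. A secondary technical point is making the perturbation/$\Delta$ bookkeeping precise enough that ``the contribution of the top edge exceeds the sum of all remaining contributions'' still holds with $2k$ edges instead of $2k-1$, which only changes the constant in the definition of $\delta$.
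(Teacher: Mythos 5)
Your framework (the one-dimensional characterization from Lemma~\ref{cycle-lemma}, followed by a hierarchy of tiny $y$-coordinates that forces edges one at a time) is the right starting point, but the even case as you sketch it has a genuine gap, and it is exactly the one you flag as ``the crux.'' With a single row of $2k$ points, a longest cycle is a Hamiltonian cycle of the complete bipartite graph between the $k$ left and the $k$ right points; after the top point $p_1$ is forced to connect to its two nearest left neighbours, those two points must next be joined to two right points, and the two possible matchings (e.g.\ $\{p_{-1}p_2,\,p_{-2}p_3\}$ versus $\{p_{-1}p_3,\,p_{-2}p_2\}$) have \emph{exactly} the same length in one dimension, so the cascade cannot break the tie by itself; moreover, whatever tie-breaking you impose must also keep the two resulting chains from crossing each other when the cycle closes up. (Note also that your candidate cycle ``$p_1,p_{-2},p_2,\dots,p_{-k},p_k$ closed up'' is not even a longest cycle on the line: the closing edge $p_kp_1$ does not cross the median.) The paper resolves this by changing the \emph{point set}, not just the $y$-hierarchy: it doubles the points, taking $p_i=(i,0)$ together with $p'_i=(i+\epsilon,0)$, so that the cycle decomposes into two interior-disjoint chains, one essentially on the unprimed and one on the primed points, and it adds a dedicated step (Lemma~\ref{cycle-delta-epsilon-lemma}) that breaks the matching ambiguity by placing $p_2$ almost on the segment $p'_{-1}p'_2$ and using the triangle inequality to get $|p_{-1}p_2|+|p'_{-1}p'_2|>|p_{-1}p'_2|+|p'_{-1}p_2|$. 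Without a substitute for these two ideas your even construction does not close.

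The odd case also does not work as stated. Removing a vertex $q$ from a longest cycle on $2k{+}1$ points leaves a path, not a cycle, and there is no reason the longest cycle on the remaining $2k$ points is obtained this way; moreover $q$ cannot simultaneously be ``far'' and be the median of the $x$-projection, which is what your appeal to Lemma~\ref{cycle-lemma}(ii)--(iii) requires. The paper instead builds the odd cycle directly: one far left point $p_{-k}$, a tight cluster of $k$ points just left of the origin, and $k$ spread points on the right, so that Lemma~\ref{cycle-lemma}(iii) forces $p_{-k}$ to attach to the cluster and to $p_1$, and the rest of the cycle is a longest path on the remaining $2k$ points supplied by a generalized form of the path construction (Lemma~\ref{evenpath-lem}); noncrossingness then follows because that path is $y$-monotone and $p_{-k}$ lies below it.
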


{\color{mycolor}In Section~\ref{even-cycle-overview} we give an overview of our construction for an even number of points. The details and proofs are given in Section~\ref{even-cycle-details}. For an odd number of points we sketch a construction in Section~\ref{odd-cycle-details}.}

\subsection{A Cycle with an Even Number of Points: An Overview}
\label{even-cycle-overview}
Let $n\geq 6$ be an even integer. Then either $n=4k$ or $n=4k{-}2$ for some integer $k\geq 2$. To simplify the indexing (of points and $y$-coordinates) in our construction, from now on we assume that $n=4k{-}2$. Let $P$ be a set of $n$ points, consisting of $2k$ points $p_i=(i,0)$ for $i=\pm1,\pm2,\dots,\pm k$ and $2k{-}2$ points $p'_i=(i{+}\epsilon,0)$ for $i=-1,\pm2,\dots,\pm (k{-}1),k$, where $\epsilon>0$ is a small value to be determined; see Figure~\ref{cycle-even-fig}. (The construction for $n=4k$ is similar; it consists of $P$ and two additional points $p_{k+1}=(k{+}1,0)$ and $p'_{-k}=(-k{+}\epsilon,0)$.) Our construction for cycles is somewhat similar to that of paths in the sense that our cycle consists of two $y$-monotone interior-disjoint paths between $p_1$ and $p_{-k}$ (or between $p_1$ and $p_{k+1}$ when $n$ is a multiple of $4$). Although the main idea sounds simple, the noncrossing property of the longest cycle is not straightforward and involves a more detailed analysis. \rev{The construction could also be done with more general $x$-coordinates as for the paths, but for simplicity we pick (almost) integers.}

{\color{mycolor}  Lemma~\ref{cycle-lemma}} implies that a spanning cycle on $P$ is longest if and only if each of its edges intersects the $y$-axis. 
Let $\cal C$ be the set of all longest spanning cycles on $P$.
As illustrated in Figure~\ref{cycle-even-fig}, we obtain a point set $P'$ by assigning to each point $p_i$ and $p'_i$ the respective $y$-coordinates $y_i$ and $y'_i$ such that:
\[{\frac{1}{16k}}=y_1\gg y_{-1}\gg y'_2\gg y_{-2}\gg y'_3\gg \cdots\gg y'_{k}\gg y_{-k}=0.\]
For each $i\in\{2, 3, \dots, k\}$ we choose $y_i$ such that $p_i$ lies just below (almost on) the segment $p'_{-i+1}p'_{i}$, and for each $i\in\{-1,-2, \dots, -(k{-}1)\}$ we choose $y'_i$ such that  $p'_i$ lies just below (almost on) the segment $p_{-i}p_{i}$.

Due to the small $y$-coordinates, \rv{every} longest cycle $C'$ on $P'$ corresponds to a cycle $C\in \cal C$. Moreover \rv{$|C'|=|C|+Y(C')$} for some small value \rv{$Y(C')$} which depends on the new $y$-coordinates. Since $p_1$ has the largest $y$-coordinate, the contribution of the two edges of $C'$ that are incident to $p_1$ (say $e_1$ and $e_2$) is maximized when they are connected to the nearest plausible neighbors which are $p_{-1}$ and $p'_{-1}$. We will choose the $y$-coordinates in such a way that the contribution of $e_1$ and $e_2$ is larger than the sum of the contributions of the remaining edges of the cycle. Thus $C'$ must connect $p_1$ to $p_{-1}$ and $p'_{-1}$. Similarly, by a suitable choice of $y$-coordinates, we enforce $C'$ to connect $p_{-1}$ and $p'_{-1}$ to the nearest plausible neighbors which are $p_2$ and $p'_2$, and so on. By repeating this process, the longest cycle $C'$ \revm{will} be the concatenation of two paths $p_1,p_{-1}, p_2, p_{-2},\dots,p_{-k}$ and $p_1,p'_{-1},p'_2,p'_{-2},\dots,p'_k,p_{-k}$.

\begin{figure}[!ht]
	\centering
	\setlength{\tabcolsep}{0in}
	\includegraphics[width=.63\columnwidth]{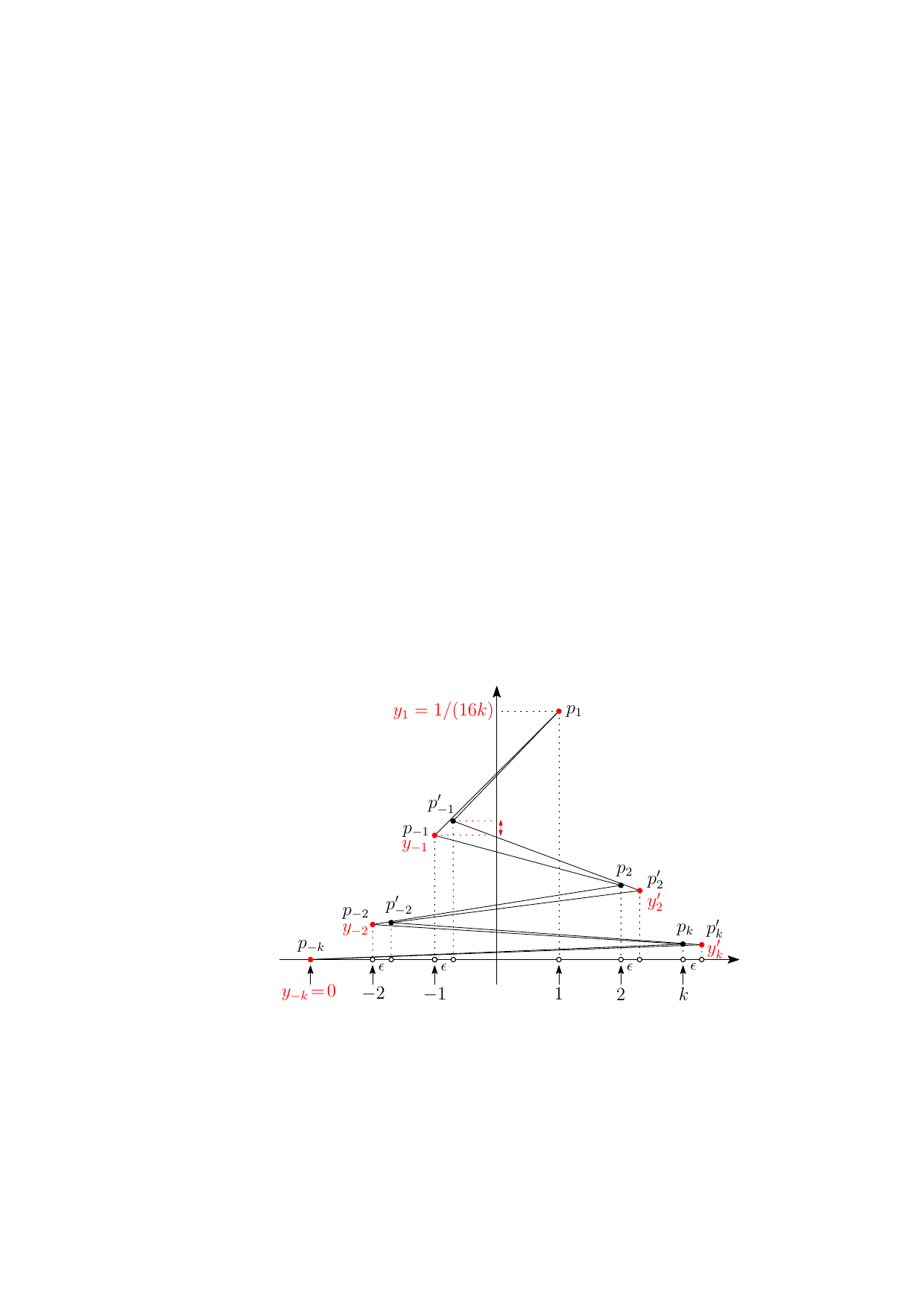}\\
	\caption{Illustration of the construction of a longest cycle for $4k{-}2$ points. The figure is not \rv{true} to scale. The $y$-coordinates should be small enough
so that all points lie almost on the $x$-axis.}
	\label{cycle-even-fig}
\end{figure}

\subsection{A Cycle with an Even Number of Points: Details}
\label{even-cycle-details}

Recall the point set $P$ from the previous \rv{subsection}  
(the $y$-coordinates and the value of $\epsilon>0$ will be determined in this section).
The longest cycles for points on a line were characterized in Lemma~\ref{cycle-lemma}. Let $\cal C$ be the set of all longest cycles on $P$.

\begin{lemma}\label{cycle-difference-lemma} 
\rv{Every} cycle in $\cal C$ is \rv{more than 2 units} longer than \rv{every} cycle not in $\cal C$.
\end{lemma}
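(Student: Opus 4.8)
The plan is to show that the length of any spanning cycle on $P$ is an integer plus a fixed quantity that depends only on how many edges straddle the $y$-axis, and then use Lemma~\ref{cycle-lemma}(iii) to separate the two cases by at least one unit. First I would recall that $P$ lies on the $x$-axis, so every edge $p_ap_b$ of a spanning cycle has length $|x_a-x_b|$. The $x$-coordinates are the integers $\pm 1,\dots,\pm k$ together with the shifted values $i+\epsilon$ for $i\in\{-1,\pm 2,\dots,\pm(k-1),k\}$. So each edge length is of the form (integer) or (integer) $\pm\epsilon$; summing over the $n$ edges of a cycle gives a total of the form $N + m\epsilon$ for some integer $N$ and some integer $m$ with $|m|\le n$. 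The key point is that $\epsilon$ can be chosen small enough (say $\epsilon < 1/(4n)$, to be fixed once and for all together with the construction) that the $m\epsilon$ term never bridges a unit gap between the integer parts coming from cycles in $\mathcal{C}$ versus cycles not in $\mathcal{C}$.

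Next I would pin down the integer part. By Lemma~\ref{cycle-lemma}(i), a cycle is in $\mathcal{C}$ exactly when every one of its $n$ edges crosses the $y$-axis (here the median of $P$ is $0$, and $0\notin P$ since $n$ is even — one should double-check that the shifted copies don't spoil this, but with $\epsilon$ small the left/right split of $P$ is the natural one, $k$ points of the form $+i$ or $i+\epsilon$ on each side... actually $P$ has $n/2$ points strictly positive and $n/2$ strictly negative, so $0$ is the median and lies in no point). By Lemma~\ref{cycle-lemma}(iii) applied with the interval $I$ between the leftmost $n/2$ and rightmost $n/2$ points (of length roughly $2$, ignoring the $\epsilon$ shift — more precisely the gap between the largest negative coordinate and the smallest positive one), a longest cycle has all $n$ edges containing $I$, while a non-longest cycle has fewer, and is by that lemma at least $2|I|$ shorter. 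Since $|I|$ is bounded below by a constant close to $2$ (certainly $|I|>1$ once $\epsilon<1/2$), we already get a gap of more than $2 > 1$ in the idealized "integer" picture; the only thing to check is that the $\epsilon$-perturbation of edge lengths cannot erode this gap below $1$.

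The main obstacle — and it is a mild one — is bookkeeping the $\epsilon$ terms carefully: I must verify that $|I|$ (which itself is an integer minus $\epsilon$, e.g. the distance from $(-1+\epsilon,0)$ or $(-1,0)$ to $(1,0)$) stays comfortably above $1$, and that the worst-case accumulated $\pm\epsilon$ contributions across all $n$ edges of the two competing cycles, at most $2n\epsilon$ in magnitude, is smaller than the surplus $2|I|-1$. Choosing $\epsilon$ sufficiently small (the construction in Section~\ref{even-cycle-details} is free to do this, as the $y$-coordinates and $\epsilon$ are both still to be determined) makes $2n\epsilon < 1$, so the difference between a cycle in $\mathcal{C}$ and one not in $\mathcal{C}$ is at least $2|I| - 2n\epsilon - (\text{slack}) \ge 1$. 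So the one-unit separation follows. I would state the argument in this order: (1) edge lengths have the form integer $\pm\epsilon$; (2) by Lemma~\ref{cycle-lemma}, cycles in $\mathcal{C}$ use $n$ straddling edges and cycles outside use at most $n-2$; (3) the idealized length gap is $2|I|\ge 2$; (4) $\epsilon$ small kills the perturbation, leaving a gap $\ge 1$.
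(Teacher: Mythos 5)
Your proposal is essentially correct and, once unwound, rests on the same mechanism as the paper's proof: a cycle not in $\mathcal{C}$ traverses the central gap between $p'_{-1}=(-1+\epsilon,0)$ and $p_1=(1,0)$ at least two fewer times than a longest cycle (by parity the number of straddling edges is even, so it drops from $n$ to at most $n-2$), and that gap has width $2-\epsilon$, giving a deficit of at least $2(2-\epsilon)>1$. Two remarks on your write-up. First, you cannot cite Lemma~\ref{cycle-lemma}(iii) verbatim: it is stated only for point sets of odd cardinality $2k{+}1$ with the interval between the leftmost $k{+}1$ and rightmost $k$ points, whereas $P$ here has an even number of points split evenly about the origin. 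The even analogue is true and is what you actually need, but it has to be argued; the paper does this directly with a single exchange --- a non-longest cycle must contain one edge entirely left of the $y$-axis and one entirely right, and swapping them for the two ``crossing'' diagonals lengthens the cycle by exactly $2(c-b)\ge 2\,|p_1p'_{-1}|=2(2-\epsilon)$, where $b$ and $c$ are the inner endpoints. Second, your integer-plus-$m\epsilon$ bookkeeping and the extra $2n\epsilon$ erosion term are superfluous: the exchange (or gap-multiplicity) bound is already exact in $\epsilon$, since the only quantity that matters is the width $2-\epsilon$ of the central gap, and $2(2-\epsilon)>1$ holds for any $\epsilon<3/2$ with no further smallness condition on $\epsilon$. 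Neither issue is fatal, but the first is a genuine hole in the citation chain that you would need to patch with the two-line exchange argument before the proof stands on its own.
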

\begin{proof}
	Consider any cycle $D$ that is not in $\cal C$.  Lemma~\ref{cycle-lemma} implies that $D$ has an edge that does not intersect the $y$-axis. Orient the edges of $D$ to make it a directed cycle. Since the number of points to the left of the $y$-axis is the same as the number of points to its right, $D$ has two directed edges $(p_{a}, p_{b})$ and $(p_c,p_d)$ such that \rv{$a,b\le -1+\epsilon$} and $c,d\ge 1$. By replacing these edges with $p_ap_c$ and $p_bp_d$ we obtain an (undirected) spanning cycle $D'$ such that \[|D'|-|D|=(|p_ap_c|+|p_bp_d|)-(|p_ap_b|+|p_cp_d|)\ge 2|p_1p'_{-1}|= 2(2-\epsilon)>2.\]Since the length of \rv{every} cycle $C$ in $\cal C$ is at least $|D'|$, we get \rv{$|C|>|D|+2$}.  
\end{proof}

To obtain $P'$ we only need to describe the following $y$-coordinates:
\[y_1\gg y_{-1}\gg y'_2\gg y_{-2}\gg y'_3\gg \cdots\gg y'_{k}\gg y_{-k}.\] 
The $y$-coordinates of the remaining points \revm{will} then follow as
outlined 
in the previous section (more details are given after Lemma~\ref{cycle-delta-epsilon-lemma}). 
We set $y_1=\frac{1}{16k}$ and $y_{-k}=0$. To assign the $y$-coordinates we use the following lemma (its proof is similar to that of Lemma~\ref{lem:invariant1}). \rev{We cannot use Lemma~\ref{lem:invariant1} and Lemma~\ref{lem:invariant2} directly because the two subpaths forming the cycle are not necessarily longest paths on their respective point sets (mainly because here $p_{-k}$ is an endpoint of the paths as opposed to $p_{-1}$).} 

\begin{lemma}
	\label{cycle-delta-lemma}
	There exists a real number $\delta$, $\epsilon\le \delta< y_1$, such that if $0\le y_i \le\delta$ for $i\neq 1$ and $0\le y'_i \le\delta$ for $i\neq -1$,
  then every longest cycle of $P'$ connects $p_1$ to $p_{-1}$ and $p'_{-1}$.
\end{lemma}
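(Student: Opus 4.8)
The plan is to mimic the proof of Lemma~\ref{delta-lemma}, the only new wrinkle being that $p_1$ now has \emph{two} incident cycle edges instead of one, and that both endpoints of each such edge may be perturbed by up to $\delta$ rather than one. First I would invoke Lemma~\ref{cycle-epsilon-lemma} (with $\varepsilon=\delta\le 1/16k$) to conclude that any longest cycle $C'$ on $P'$ corresponds to a cycle $C\in\mathcal C$, so $|C'|=|C|+\Delta(C')$ with $\Delta(C')\ge 0$ small, and the per-edge contribution $\sqrt{|i-j|^2+|y_i-y_j|^2}-|i-j|$ is squeezed between $0$ and $\max\{y_i,y_j\}$. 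By Lemma~\ref{cycle-lemma} every edge of $C$ crosses the $y$-axis, so the two edges $e_1,e_2$ of $C'$ incident to $p_1$ join $p_1$ to points strictly to the left of the $y$-axis; the candidates for each such neighbor are $p_{-1},p'_{-1},p_{-2},p'_{-2},\dots$, with $p_{-1}$ and $p'_{-1}$ being the two closest (here $|p_1p_{-1}|$ and $|p_1p'_{-1}|=2-\epsilon$ are the two smallest such distances, provided $\epsilon$ is small). Since $e_1$ and $e_2$ go to distinct left neighbors, the pair of left neighbors that maximizes the combined contribution to $\Delta(C')$ is exactly $\{p_{-1},p'_{-1}\}$.

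Next I would make this quantitative exactly as in Lemma~\ref{delta-lemma}. For a left candidate $q$ let $\ell(q)=|p_1\,q^{(0)}|$ denote the distance from $p_1$ to the $x$-axis point under $q$, and $\Delta(q)$ the horizontal offset; the contribution of edge $p_1q$ to $\Delta(C')$ lies in $[\ell(q)-2\delta-\Delta(q),\ \ell(q)-\Delta(q)]$ (both endpoints of the edge can move vertically by up to $\delta$, which is why the lower bound loses $2\delta$ rather than $\delta$; we also use that lifting $p_1$ by $y_1$ only increases $\ell$, and including the $y_1$ term only helps the two $p_1$-edges). An elementary estimate — driven by the fact that consecutive horizontal offsets differ by exactly $1$ while consecutive $\ell$'s differ by less than $1$ — shows that $\ell(q)-\Delta(q)$ is strictly decreasing as $q$ ranges over the left candidates in order of increasing distance, so there is a positive gap $g$ between the value at the pair $\{p_{-1},p'_{-1}\}$ and the value at the next-best pair (the pair obtained by swapping the farther of $p_{-1},p'_{-1}$ for the third-closest left point, e.g.\ $p_{-2}$). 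Choosing $\delta$ small enough that $4\delta<g$ and also $\delta\ge\epsilon$ (which is compatible, by first fixing $\epsilon$ tiny) guarantees that any cycle using a left neighbor other than $p_{-1}$ or $p'_{-1}$ for $e_1$ or $e_2$ loses more from that edge than it could ever gain from all $n-2$ remaining edges combined (each of which contributes less than $\delta$, for a total under $(n-2)\delta$); taking $\delta=\bigl((\text{best})-(\text{second best})\bigr)/(2n)$ or similar makes this bookkeeping airtight. Hence every longest cycle $C'$ must use edges $p_1p_{-1}$ and $p_1p'_{-1}$.

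The main obstacle I anticipate is purely combinatorial rather than analytic: one must be careful that the ``next-best'' competing configuration for the \emph{pair} $\{e_1,e_2\}$ is correctly identified (it is not enough to compare single edges, since $C'$ needs two distinct left neighbors), and that the interplay between $\epsilon$ and $\delta$ is consistent — $\epsilon$ must be chosen first and small enough that $|p_1p'_{-1}|=2-\epsilon<|p_1p_{-2}|$ and, more importantly, that the ordering of the relevant $\ell(q)-\Delta(q)$ values (which now involves both the $p_i$ and the $p'_i$ candidates) is the claimed one, and only then is $\delta$ picked in terms of the resulting gap. The constraint $\delta\ge\epsilon$ demanded in the statement is exactly what lets us subsequently re-run the argument at $p_{-1}$ and $p'_{-1}$ with their $y$-coordinates set to $\delta$, so I would remark that $\delta$ can always be taken $\ge\epsilon$ by, if necessary, shrinking $\epsilon$ afterwards (all the inequalities above are open and survive shrinking $\epsilon$). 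Everything else is a verbatim copy of the Lemma~\ref{delta-lemma} machinery, with the single substitution of ``one incident edge moving by $\delta$'' by ``two incident edges, each with both endpoints moving by $\delta$,'' absorbed into constants.
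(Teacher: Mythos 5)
Your proposal is correct and follows essentially the same route as the paper: invoke Lemma~\ref{cycle-epsilon-lemma} to reduce to cycles in $\mathcal C$, write $|C'|=|C|+\Delta(C')$, order the candidate left neighbors of $p_1$ by the quantities $\ell-\Delta$ (which interleave as $\ell'_1-\Delta'_1>\ell_1-\Delta_1>\ell'_2-\Delta'_2>\cdots$), and choose $\delta$ as a fraction of the gap divided by the number of edges so that each of the two $p_1$-edges outweighs every competitor and the sum of all remaining edge contributions. Your slightly more conservative $2\delta$ slack (versus the paper's $\delta+\epsilon$ bound used specifically for $p_1p'_{-1}$, whose $y$-coordinate is exempted from the $\le\delta$ constraint) and your pairwise rather than edge-by-edge comparison are only cosmetic differences, absorbed into the constants exactly as you indicate.
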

\begin{proof}
Corollary~\ref{epsilon-cor-revised} implies that \rv{every} longest cycle $C'$ on $P'$ corresponds to a cycle $C$ in $\cal C$. \rv{By the triangle inequality and our choice of the $y$-coordinates, we have $|C'|=|C|+Y(C')$ for  $0\le Y(C')\leq|P|\cdot\frac{1}{16k}\le\frac{1}{4}$.} Lemma~\ref{cycle-lemma} implies that $C'$ connects $p_1$ to two points to the left of the $y$-axis. Similar to Lemma~\ref{lem:invariant1}, for $j\in\{1,\dots,k\}$ define $\ell_{j}$ as the Euclidean distance between $p_1$ and the point $(-j,0)$, and define $\Delta_{j}$ as the difference of their $x$-coordinates. Analogously, for $j\in\{1,\dots,k{-}1\}$ define $\ell'_{j}$ and $\Delta'_{j}$ for $p_1$ and the point $(0,-j+\epsilon)$. Every edge that connects $p_1$ to a point to the left of the $y$-axis has the following contributions to $|C|$, $|C'|$ and \rv{$Y(C')$}.

\begin{itemize}
    \item For $j\in\{1,\dots,k\}$ the contribution of  $p_1p_{-j}$ to $|C'|$ is at least $\ell_j-\delta$ and at most $\ell_j$. The contribution of the corresponding edge to $|C|$ is $\Delta_j$. Hence the contribution of $p_1p_{-j}$ to \rv{$Y(C')$} is at least $\ell_j-\delta-\Delta_j$ and at most $\ell_j-\Delta_j$.
    \item For $j\in\{2,\dots,k{-}1\}$ the contribution of $p_1p'_{-j}$ to $|C'|$ is at least $\ell'_j-\delta$ and at most $\ell_j$. The contribution of the corresponding edge to $|C|$ is $\Delta'_j$. Thus the contribution of $p_1p'_{-j}$ to \rv{$Y(C')$} is at least $\ell'_j-\delta-\Delta'_j$ and at most $\ell'_j-\Delta'_j$.
    \item The contribution of $p_1p'_{-1}$ to $|C'|$ is at least $\ell'_1-\delta-\epsilon$ because the $y$-coordinate of $p'_{-1}$ is at most $\delta+\epsilon$; to verify this observe that $y_{-1}\le \delta$ and $y'_{-1}-y_{-1}<\epsilon$ because $p'_{-1}$ is almost on $p_1p_{-1}$ whose slope is less than $1$; also see Figure~\ref{cycle-even-fig} (recall that the figure is not to \rv{true} scale). The contribution of the corresponding edge to $|C|$ is $\Delta'_{1}$. Therefore the contribution of $p_1p'_{-1}$ to \rv{$Y(C')$} is at least $\ell'_1-\delta-\epsilon-\Delta'_1$ and at most $\ell'_1-\Delta'_1$. 
\end{itemize}

Observe that 
	\[\ell'_{1}-
	\Delta'_{1}>\ell_{1}-
	\Delta_{1}>\ell'_{2}-
	\Delta'_{2}>\ell_{2}-
	\Delta_{2}> \dots > \ell_{k}-
	\Delta_{k}.\]
If we set $\delta <\frac{1}{2}\left((\ell_{1}-
	\Delta_{1})-(\ell'_{2}-
	\Delta'_{2})\right)$, then the contributions of $p_1p_{-1}$ and $p_1p'_{-1}$ to \rv{$Y(C')$} would respectively be at least 
 \[\ell_{1}-\delta-	\Delta_{1}>\ell_{1}-2\delta-
	\Delta_{1}>\ell_{1}-
	\Delta_{1}-((\ell_{1}-
	\Delta_{1})-(\ell'_{2}-	\Delta'_{2}))=\ell'_{2}-
	\Delta'_{2},\textrm{ and}\] \[ \ell'_1-\delta-\epsilon-\Delta'_1\ge\ell'_{1}-2\delta-	\Delta'_{1}>\ell'_{1}-
	\Delta'_{1}-((\ell_{1}-
	\Delta_{1})-(\ell'_{2}-
	\Delta'_{2}))>\ell'_{2}-
	\Delta'_{2},\] which are larger than the contribution of \rv{every} other edge $p_1p_{-j}$ and $p_1p'_{-j}$. 
	By setting    
    \rvm{\[
    \delta=\frac{1}{2}\cdot\min\left\{\frac{(\ell_{1}-\delta-
	\Delta_{1})-(\ell'_{2}-
	\Delta'_{2})}{4k-2}, \frac{(\ell'_{1}-\delta-\epsilon-
	\Delta'_{1})-(\ell'_{2}-
	\Delta'_{2})}{4k-2}\right\},\]}the contribution of $p_1p_{-1}$ and $p_1p'_{-1}$ each would be even larger than the sum of the contributions of the remaining $4k{-}4$ edges of $C'$. \rvm{Since $\epsilon\leq \delta$, after rearranging we \revm{finally fix} \[\delta=\min\left\{\frac{(\ell_{1}-
	\Delta_{1})-(\ell'_{2}-
	\Delta'_{2})}{8k-3}, \frac{(\ell'_{1}-
	\Delta'_{1})-(\ell'_{2}-
	\Delta'_{2})}{8k-2}\right\}.\]}
 Thus, for this choice of $\delta$, the longest cycle $C'$ connects $p_1$ to $p_{-1}$ and $p'_{-1}$.
\end{proof}

We choose $\delta$ as in the proof of Lemma~\ref{cycle-delta-lemma}, and set $y_{-1}=\delta$. Then we set  $y'_{-1}$ so that $p'_{-1}$ lies just below (almost on) the segment $p_1p_{-1}$, as in Figure~\ref{cycle-even-fig}. {\color{mycolor}Notice that $\delta<y'_{-1}<\delta+\epsilon=y_{-1}+\epsilon$.} Then, by Lemma~\ref{cycle-delta-lemma} the longest cycle connects $p_1$ to $p_{-1}$ and $p'_{-1}$. By Lemma~\ref{cycle-lemma}, the other edges incident to $p_{-1}$ and $p'_{-1}$ must cross the $y$-axis.

{\color{mycolor}
\begin{lemma}
\label{cycle-delta-epsilon-lemma}
    There exists a real \revm{number} $\delta$, $\epsilon \le \delta<y_{-1}$, such that if $0\le y_i \le\delta$ for $i\neq -1,1,2$ and $0\le y'_i \le\delta$ for $i\neq -1$,
  then every longest cycle of $P'$ connects $p_{-1}$ to $p_2$ and $p'_{-1}$ to $p'_{2}$.
\end{lemma}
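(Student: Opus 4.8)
The plan is to mimic the proof of Lemma~\ref{cycle-delta-lemma} almost verbatim, but now with $p_{-1}$ (having the current largest free $y$-coordinate $y_{-1}$) and $p'_{-1}$ playing the roles that $p_1$ and its two neighbors played before. First I would invoke Lemma~\ref{cycle-epsilon-lemma} to conclude that any longest cycle $C'$ on $P'$ corresponds to a cycle $C \in \mathcal C$, so $|C'| = |C| + \Delta(C')$ with $\Delta(C')$ controlled by the new $y$-coordinates. By Lemma~\ref{cycle-delta-lemma} (and the choices already fixed after it), $C'$ already uses the edges $p_1p_{-1}$ and $p_1p'_{-1}$; hence $p_{-1}$ has exactly one further incident edge in $C'$ and so does $p'_{-1}$, and by Lemma~\ref{cycle-lemma} each of those remaining edges must cross the $y$-axis, i.e. it joins $p_{-1}$ (resp.\ $p'_{-1}$) to a point with positive $x$-coordinate.

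Next I would set up the contribution bookkeeping exactly as in Lemma~\ref{cycle-delta-lemma}: for each plausible target on the positive side — the points $p_j$ for $j\in\{2,\dots,k\}$ and $p'_j$ for $j\in\{2,\dots,k\}$ — write the contribution of the edge $p_{-1}p_j$ (resp.\ $p_{-1}p'_j$, $p'_{-1}p_j$, $p'_{-1}p'_j$) to $\Delta(C')$ as (Euclidean length) $-$ ($x$-distance), bounded below by that quantity minus $O(\delta)$ (plus an extra $\epsilon$ slack when the target is a primed point at height $\le \delta+\epsilon$, exactly the same subtlety handled in the previous lemma for $p'_{-1}$). The key geometric inequality to record is the strict ordering
\[
\ell''_{2}-\Delta''_{2} > \ell''_{2,\mathrm{prime}}? \dots
\]
— more precisely, among all candidate edges from $p_{-1}$, the pair going to the nearest plausible neighbors $p_2$ and $p'_2$ strictly dominates the "length minus $x$-distance" of every other candidate, because moving the far endpoint one unit farther in $x$ increases the $x$-distance by exactly $1$ but increases the Euclidean length by strictly less than $1$ (the same "$\Delta_{i+1}-\Delta_i=1$ while $\ell_{i+1}-\ell_i<1$" argument). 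Then, choosing $\delta$ smaller than a fixed fraction (namely $1/(4k-2)$, or just any small enough constant gap) of the minimum of these strict gaps, the contribution of $p_{-1}p_2$ alone, and likewise of $p'_{-1}p'_2$, exceeds the total possible contribution of all $4k-4$ remaining edges of $C'$; here one uses that every other edge of $C'$ has both endpoints of height $\le\delta$ except possibly a primed endpoint of height $\le\delta+\epsilon$, and that $\epsilon\le\delta$, so each such edge contributes less than $2\delta$ (or a similar explicit bound). Consequently the longest cycle is forced to use $p_{-1}p_2$ and $p'_{-1}p'_2$.

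One subtlety I would flag explicitly: it is not a priori obvious that $p_{-1}$ and $p'_{-1}$ get matched to $p_2$ and $p'_2$ respectively rather than "crossed" (i.e.\ $p_{-1}$ to $p'_2$ and $p'_{-1}$ to $p_2$), so I would compare the two pairings and show the "parallel" pairing $\{p_{-1}p_2,\ p'_{-1}p'_2\}$ is strictly longer — this follows because $p_{-1}$ and $p'_2$ are nearly collinear with the relevant segment while $p_{-1}p_2$ and $p'_{-1}p'_2$ pick up the height of $p_{-1}$ (resp.\ $p'_{-1}$) maximally; quantitatively it is again a "$\ell-\Delta$" comparison, and the chosen $y$-coordinates ($p_2$ just below segment $p'_{-1}p'_2$, etc., as specified in the overview) are exactly what makes the parallel pairing win. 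The main obstacle, and the only place real care is needed, is getting all these strict inequalities to point the same way simultaneously and bounding the cumulative contribution of the remaining edges — but this is precisely the calculation already carried out in Lemma~\ref{cycle-delta-lemma}, so I would state that the argument is "similar to Lemma~\ref{cycle-delta-lemma}" and only spell out the one new feature (the parallel-versus-crossed pairing check), then fix $\delta=\tfrac12\big((\ell_{2}-\Delta_{2})-(\ell'_{3}-\Delta'_{3})\big)/(4k-2)$ and conclude.
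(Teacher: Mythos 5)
Your proposal is correct and follows essentially the same route as the paper: apply the contribution argument of Lemma~\ref{cycle-delta-lemma} once from $p_{-1}$ and once from $p'_{-1}$ to force both into $\{p_2,p'_2\}$, then break the parallel-versus-crossed tie using the fact that $p_2$ sits almost on the segment $p'_{-1}p'_2$ (the paper phrases this as $|p'_{-1}p'_2|\approx|p'_{-1}p_2|+|p_2p'_2|$ plus the triangle inequality, which is exactly your "near-collinearity" comparison). The only cosmetic differences are your edge count ($4k-4$ versus the paper's $4k-6$ remaining edges) and the exact formula for $\delta$ (the paper takes the minimum of the two thresholds obtained from the two applications), neither of which affects the argument.
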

\begin{proof}
Recall the longest cycle $C'$ from the proof of Lemma~\ref{cycle-delta-lemma}. We \revm{will} choose $\delta$ small enough such that the contribution of each of $p_{-1}p_2$, $p_{-1}p'_2$, $p'_{-1}p_2$, and $p'_{-1}p'_2$ to \rv{$Y(C')$} is larger than the sum of the contributions
of the remaining $4k{-}6$ edges of $C'$. This \revm{will} force $C'$ to connect $p_{-1}$ and $p'_{-1}$ to $p_{2}$ and $p'_{2}$.

By an argument similar to that of Lemma~\ref{cycle-delta-lemma} we can find a parameter $\delta_{1}$ that forces $C'$ to connect $p_{-1}$ to $p_2$ or $p'_2$ ($\delta_1$, $y_{-1}$, $p_{-1}$, $p_2$, and $p'_2$ play the roles of $\delta$, $y_1$, $p_1$, $p'_{-1}$, and $p_{-1}$, respectively). Similarly, we can find a parameter $\delta'_{1}$  that forces $C'$ to connect $p'_{-1}$ to $p_2$ or $p'_2$ (where $\delta'_1$, $y'_{-1}$, $p'_{-1}$, $p_2$, and $p'_2$ play the roles of $\delta$, $y_1$, $p_1$, $p'_{-1}$, and $p_{-1}$, respectively).
Then we choose $\delta=\min\{\delta_{1},\delta'_{1}\}$.

Our choice of $\delta$ ensures that $C'$ connects $p_{-1}$ and $p'_{-1}$ to $p_{2}$ and $p'_{2}$. Notice that $p_{-1}$ and $p'_{-1}$ cannot both connect to $p_{2}$ or to $p'_{2}$ because it closes the cycle. Thus $C'$ must use $p_{-1}p_2$ and $p'_{-1}p'_{2}$ or $p_{-1}p'_{2}$ and $p'_{-1}p_{2}$. We show that $C'$ uses $p_{-1}p_2$ and $p'_{-1}p'_{2}$. See Figure~\ref{cycle-edges-fig}.
Recall that $p_{2}$ is almost on the edge $p'_{-1}p'_{2}$, and hence $|p'_{-1}p'_{2}|\approx |p'_{-1}p_{2}|+|p_{2}p'_{2}|$. By the triangle inequality we get $|p_{-1}p_{2}|+|p_{2}p'_{2}|>|p_{-1}p'_{2}|$.
Adding these two yields 
\begin{equation}\label{eq1}|p_{-1}p_{2}|+|p'_{-1}p'_{2}|> |p_{-1}p'_{2}|+|p'_{-1}p_{2}|,\end{equation}
which means that $C'$ connects $p_{-1}$ to $p_2$ and $p'_{-1}$ to $p'_{2}$.  
\end{proof}
\begin{figure}[H]
	\centering
\setlength{\tabcolsep}{0in}
\includegraphics[width=.47\columnwidth]{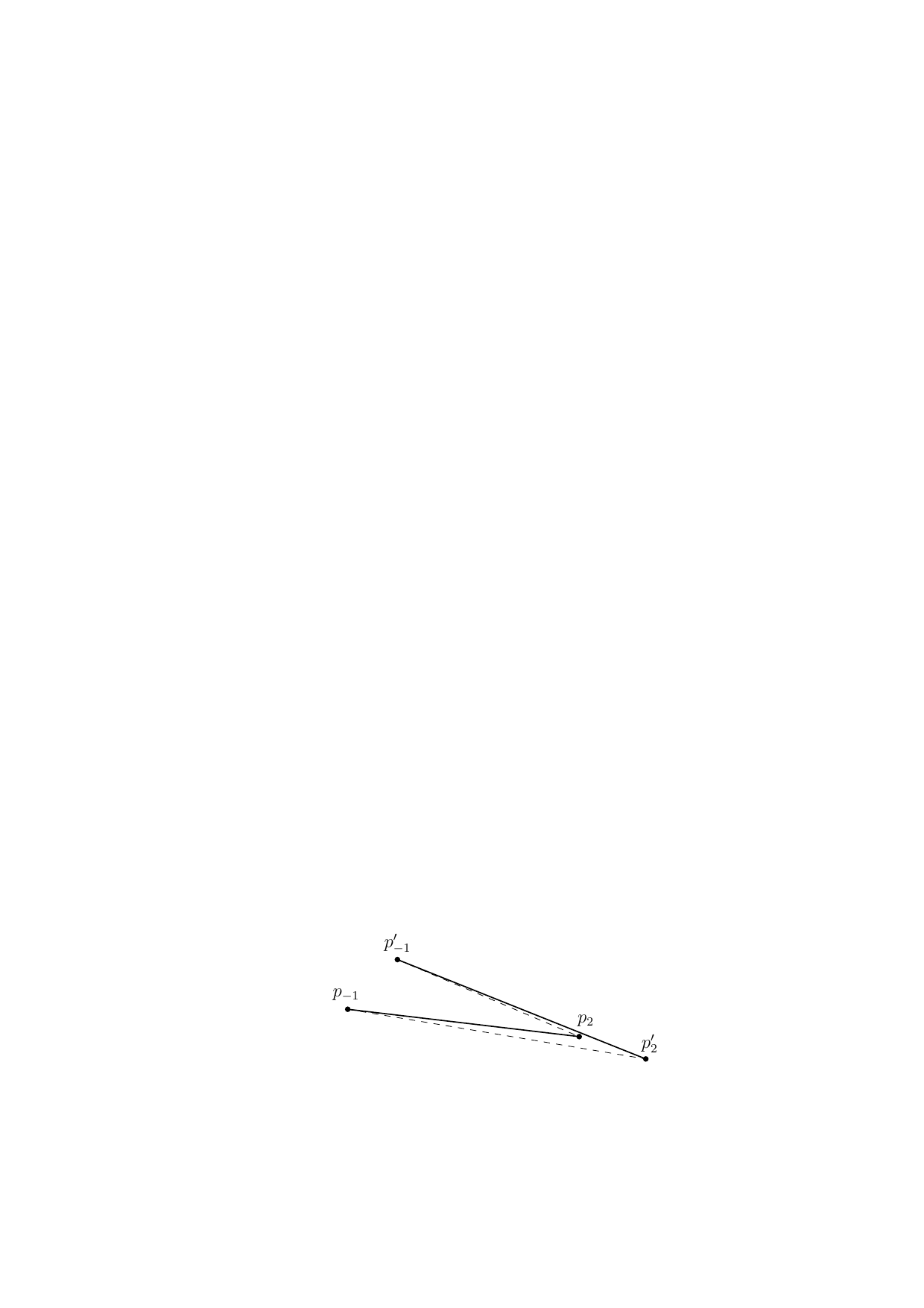}\\
	\caption{The longest cycle connects $p_{-1}$ to $p_{2}$ and $p'_{-1}$ to $p'_{2}$}
	\label{cycle-edges-fig}
\end{figure}

We choose our new $\delta$ as in the proof of Lemma~\ref{cycle-delta-epsilon-lemma}, and set $y'_{2}=\delta$. Now that the point $p'_2$ is fixed we can choose the $y$-coordinate of $p_2$ in the triangle $\bigtriangleup p_{-1}p'_{-1}p'_2$ and very close to the segment $p'_{-1}p'_2$ such that~\eqref{eq1} holds.
This forces the longest cycle to use $p_{-1}p_2$ and $p'_{-1}p'_2$. \rev{We can repeat the argument of  Lemma~\ref{cycle-delta-epsilon-lemma} such that for $j\ge 2$ the longest cycle will use the edges $p_jp_{-j}$, $p'_jp'_{-j}$, $p_{-j}p_{j+1}$, and $p'_{-j}p'_{j+1}$}. Therefore
the longest cycle on $P'$ is the concatenation of two paths: $p_1,p_{-1}, p_2, p_{-2},\dots,p_{-k}$ and $p_1,p'_{-1},p'_2,p'_{-2},\dots,p'_k,p_{-k}$. This cycle is unique and noncrossing. 

Each time we apply Lemma~\ref{cycle-delta-epsilon-lemma} we obtain a new value for $\delta$. In each \rv{iteration} we need $\delta$ to be greater than or equal to our fixed parameter $\epsilon$. For this purpose, we choose $\epsilon$ to be the parameter $\delta$ that is obtained in the last invocation of Lemma~\ref{cycle-delta-epsilon-lemma}, that is\revm{,} $\epsilon=y'_k$.

\subsection{A Cycle with an Odd Number of Points: An Overview}
\label{odd-cycle-details}

We start with an outline of our construction; see Figure~\ref{cycle-odd-fig} for an illustration. Let $n=2k{+}1$, for $k\geq 2$. We choose a set of $x$-coordinates as $P=\{-k, -(k{-}1)\epsilon,-(k{-}2)\epsilon,\ldots , -\epsilon,0, 1,2,\ldots  ,k\}$, where $\epsilon\in (0,1/(16k^2))$ will be specified later. Note that $0$ is the median of $P$, and the set $A=\{-i\cdot \epsilon: i=0,1,\ldots, k-1\}\subset [-1/(16k),0]$ forms a small cluster. 

\begin{figure}[!ht]
	\centering
	\setlength{\tabcolsep}{0in}
	\includegraphics[width=.75\columnwidth]{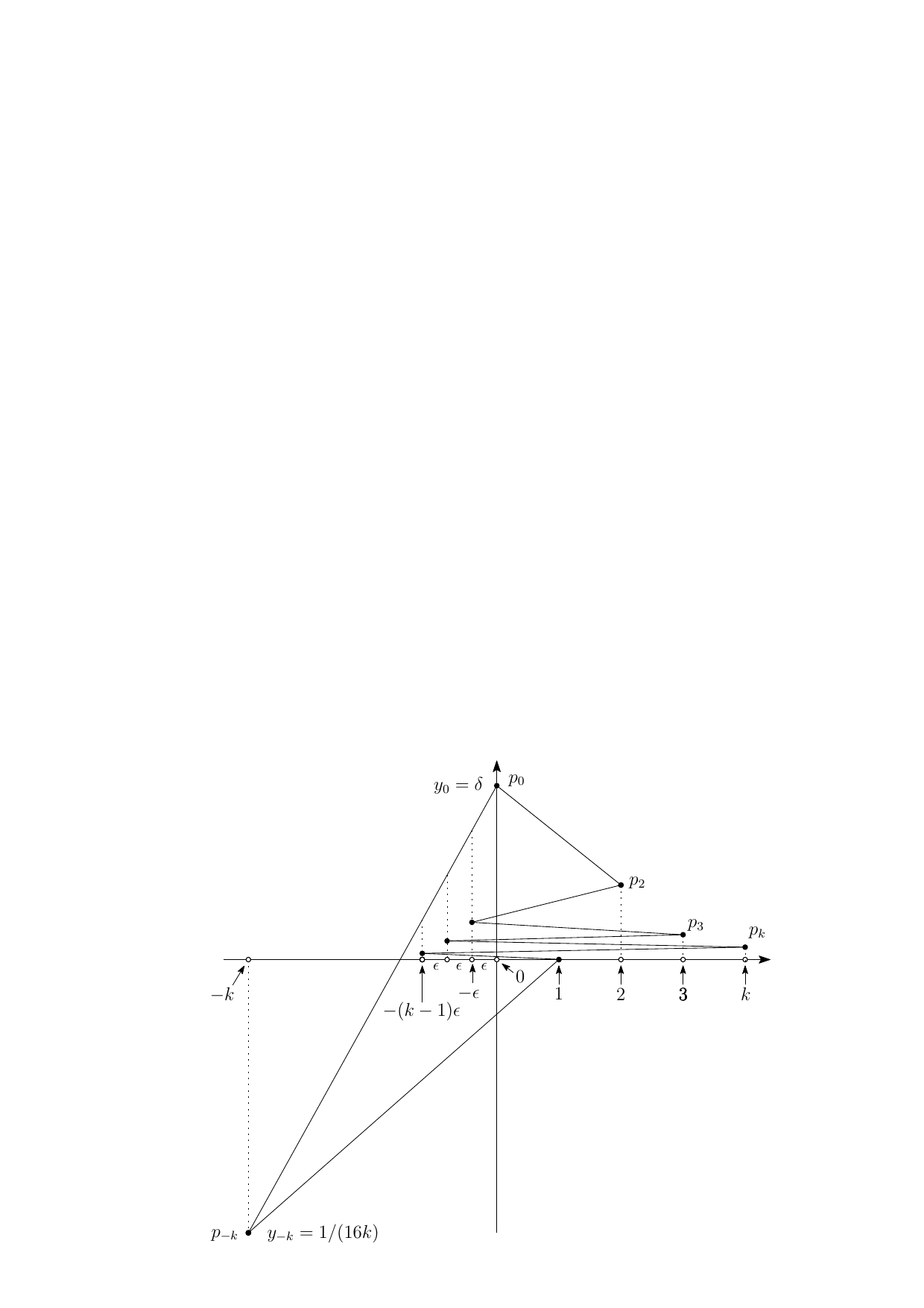}\\
	\caption{Illustration of the construction of a longest cycle for $2k{+}1$ points. }
	\label{cycle-odd-fig}
\end{figure}

Below, we will specify a $y$-coordinate for each element in $P$. This will result in the point set $P'$ for which the longest spanning cycle is unique and noncrossing. We will denote by $A'$ the set of points in $P'$ corresponding to $A$. Let $\mathcal{C}$ be the set of longest spanning cycles on $P'$.

It remains to specify the $y$-coordinates of the points in $P'$ and the parameter $\epsilon$. Let $p_x$ denote the point in $P'$ with $x$-coordinate $x\in P$. We first choose the $y$-coordinate for the leftmost point: Let $y_{-k}=-1/(16k)$; this is the only negative $y$-coordinate. We assume that $|y_i|\ll 1/(16k)$ for all other points.
\rev{We cannot use Corollary~\ref{epsilon-cor-revised} directly for $P$ because} $1/(16k)$ is too large to make any conclusion about the longest spanning cycles as the points in cluster $A'$ are indistinguishable.} \rev{Thus we define a point set, close enough to $P$, for which we can apply Corollary~\ref{epsilon-cor-revised}.} Let $\widehat{P}$ denote the multiset obtained from $P$ \revm{by} perturbing all $k$ points in $A$ to 0. Let $\mathcal{A}$ be the set of longest spanning cycles on $\widehat{P}$, and $\mathcal{B}$ the set of all other spanning cycles on $\widehat{P}$. Since the points in $\widehat{P}$ are integers, then $\beta\geq 1$, where $\beta= \min\{|A|-|B| : A\in \mathcal{A}, B\in \mathcal{B}\}$. Corollary~\ref{epsilon-cor-revised} implies that every longest spanning cycle $C'$ on $P'$ corresponds to \revm{a} cycle $C\in \mathcal{A}$. This, in turn, implies that $|C'|=|C|+Y(C',\epsilon)$ for some small value $Y(C',\epsilon)$ that depends on the new $y$-coordinates and $\epsilon$.

\begin{lemma}\label{lem:pkneighbors}
    In every longest cycle on $P'$, point $p_{-k}$ is adjacent to at most one point in cluster $A'$ and at least one point in $\{p_1,\ldots , p_k\}$. 
\end{lemma}
\begin{proof}
Let $C'$ be a longest spanning cycle on $P'$, corresponding to a longest spanning cycle $\widehat{C}$ on~$\widehat{P}$. By Lemma~\ref{cycle-lemma}(i), every edge of $\widehat{C}$ intersects the median $0$ of $\widehat{P}$. This implies that every edge of $C'$ intersects the $y$-axis or has an endpoint in the cluster $A'$.
    
Note that both $P$ and $\widehat{P}$ contain the unit-length interval $I=[0,1]$ between the leftmost $k{+}1$ and the rightmost $k$ points. By Lemma~\ref{cycle-lemma}(iii), $n-1$ edges of $\widehat{C}$ contain the interval $[0,1]$. This means that at most one edge of $\widehat{C}$ does not contain $[0,1]$. Consequently, in the cycle $C'$, at most one edge \revm{connects} $p_{-k}$ to a point in $A'$ and at least one edge connects it to a point in $\{p_1,\ldots , p_k\}$. 
\end{proof}

Similar to Lemma~\ref{cycle-delta-lemma}, there is a threshold $\delta>0$ such that if $0\leq y_i\leq \delta$ for all $i\neq -k$, then in every longest cycle on $P'$, point $p_{-k}$ is adjacent to the two closest plausible points (subject to the constraints in Lemma~\ref{lem:pkneighbors}): That is, $p_{-k}$ is adjacent to a point in cluster $A'$ and to $p_{1}$.
We set $y_0=\delta$ and find a threshold $\delta_1\in (0,\delta)$ such that if $0\leq y_i\leq \delta_1$ for all remaining points and $0<\epsilon<\delta_1$, then \rv{the contribution of $p_{-k}p_0$ and $p_{-k}p_1$ to $Y(C',\epsilon)$ each} exceeds the sum of contributions of all remaining edges of a cycle $C'\in \mathcal{C}$. Consequently, every longest cycle on $P'$ must include \rv{the edges $p_{-k}p_0$ and $p_{-k}p_1$}. Now both $p_{-k}$ and $p_0$ are fixed, and we choose a sufficiently small $\epsilon\in (0,\delta_1)$ such that all remaining points in the cluster $A'$ are below $p_{-k}p_0$ for all possible $y$-coordinates, \rev{and that distances of the points in $A'$ to $0$ are smaller than $1/(16k)$.}

\rv{Since $\epsilon>0$ is fixed now, $Y(C',\epsilon)$ depends only on the $y$-coordinates that are still unspecified.}
A longest cycle $C'$ on $P'$ comprises of $p_{-k}p_0$, $p_{-k}p_1$, and a longest spanning path $H'$ on $P'\setminus \{p_{-k}\}$ from $p_0$ to $p_1$ on $P'\setminus \{p_{-k}\}$.
\rv{Let $\mathcal{H}$ denote the set of such paths $H'$. Note that $H'$ need not be a longest path on $P'\setminus \{p_{-k}\}$ (in particular, $y_0=\delta$ may be too large to apply Corollary~\ref{epsilon-cor-revised} for $P'\setminus \{p_{-k}\}$). However, since $C'$ corresponds to a cycle $C\in \mathcal{A}$, then all edges in $H'$ intersect the line $y=\frac12$ (which is the median of $P'\setminus \{p_{-k}\}$). 
By Lemma~\ref{path-lemma}, $H'$ corresponds to a longest spanning path on $P\setminus \{-k\}$. As $P\setminus \{-k\}$ contains an even number of points. We can now proceed as in the proof of Lemma~\ref{evenpath-lem+},} and successively choose $y$-coordinates for the remaining points such that $H'$ is unique and $y$-monotone (hence noncrossing); and $y_1=0$. 

Finally, we show that $C'$ is also noncrossing. Edge $p_{-k}p_1$ lies below the $x$-axis, hence below the entire path $H'$; and $P'\setminus \{p_{-k},p_0\}$ lies below the supporting line of $p_{-k}p_0$. Consequently, the concatenation of $p_{-k}p_0$, $p_{-k}p_1$ and $H'$ is noncrossing. 

\begin{remark}{\rm Our construction in this section suggests an alternative construction for even cycles that can be obtained by connecting a point to both endpoints of an odd path.}
\end{remark}

\section{Noncrossing Longest Matchings}
\label{matching-section}
\rv{An example of a point set} for which the longest perfect matching is noncrossing is already known~\cite{Rebollar2024}. 
This example is attributed to K\r{a}ra P.~Villanger in a paper by Tverberg~\cite{Tverberg1979}.   As illustrated in Figure~\ref{Villanger-fig}, it consists of a set $S$ of $k$ segments with endpoints in $A=\{a_1,\dots,a_k\}$ and $B=\{b_1,\dots,b_k\}$. The distance between \rv{every} two points $a_i\in A$ and $b_j\in B$ is larger than the distance between \rv{every} two points in $A$, or the distance between \rv{every} two points in $B$. The points in $B$ are roughly on a vertical line. A precise description of the construction along with a detailed proof that $S$ is a longest matching for $A\cup B$ \rv{is provided in~\cite{Rebollar2024}}.
\begin{figure}[H]
	\centering	
 \vspace{-7pt}
\includegraphics[width=.6\columnwidth]{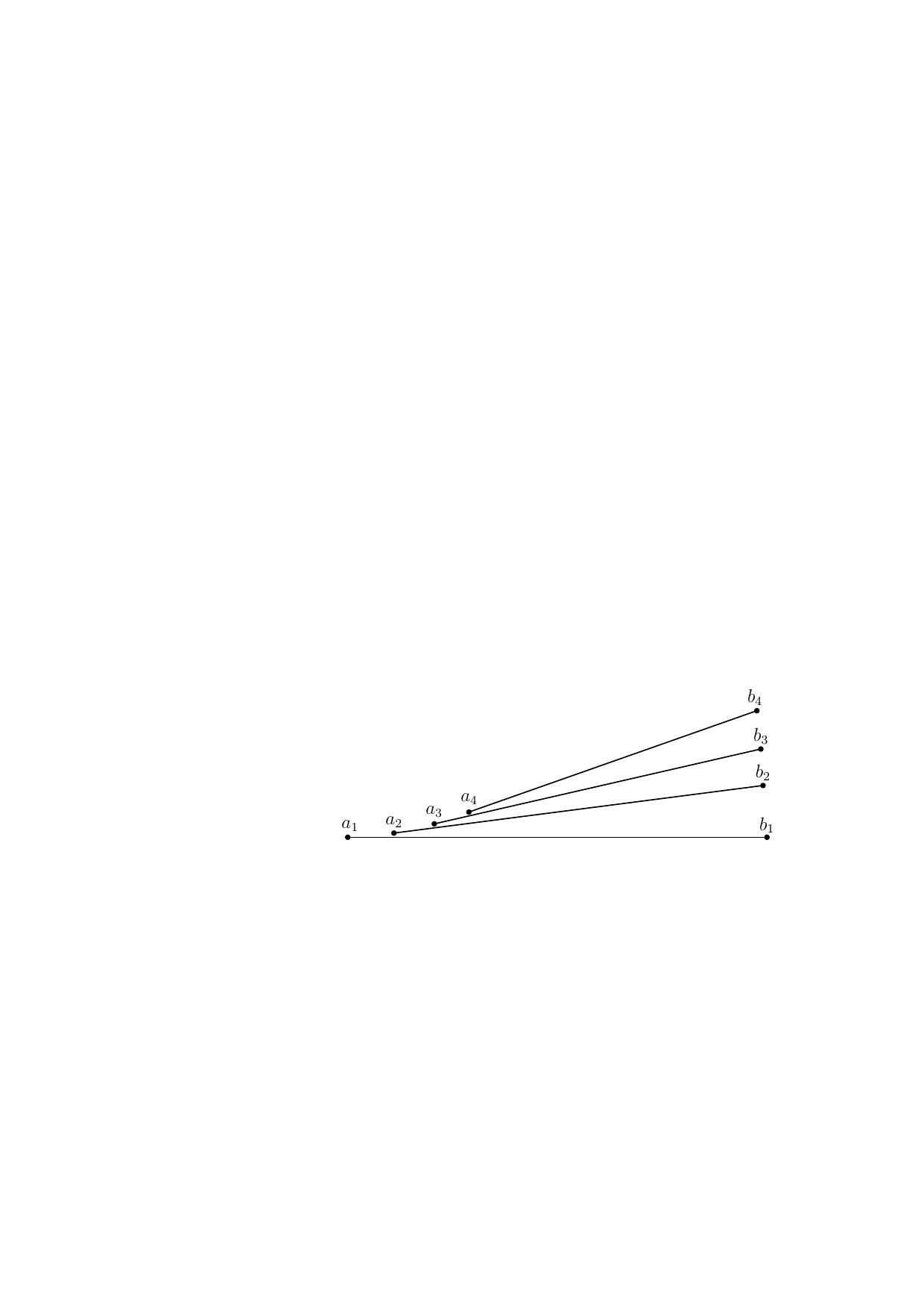}
	\caption{Villanger’s configuration as illustrated in \cite{Rebollar2024}.}
	\label{Villanger-fig}
\end{figure}

Here, we exhibit an alternative point set for which the longest perfect matching is noncrossing.
Our construction follows the same framework as for paths and cycles. \rev{For simplicity we pick integer $x$-coordinates, however, it can be done with more general $x$-coordinates.}
Let $P$ be a set of $2k$ points $p_i=(i,0)$ for $ i=\pm1,\pm2,\dots,\pm k$. One can verify that a perfect matching on $P$ is longest if and only if all edges cross the $y$-axis. One such matching is $M=\{p_{-i}p_i\colon i=1,\dots,k\}$. Using ideas similar to those used for paths and cycles, one can assign to each $p_i$ a new $y$-coordinate $y_i$ to make $M$ longest and noncrossing at the same time; see Figure~\ref{matching-fig}. The new $y$-coordinates are of the following form: $y_1\gg y_{-1}=y_2\gg y_{-2}=y_3\gg\cdots= y_{k}\gg y_{-k}$.

\begin{figure}[!ht]
	\centering
\includegraphics[width=.6\columnwidth]{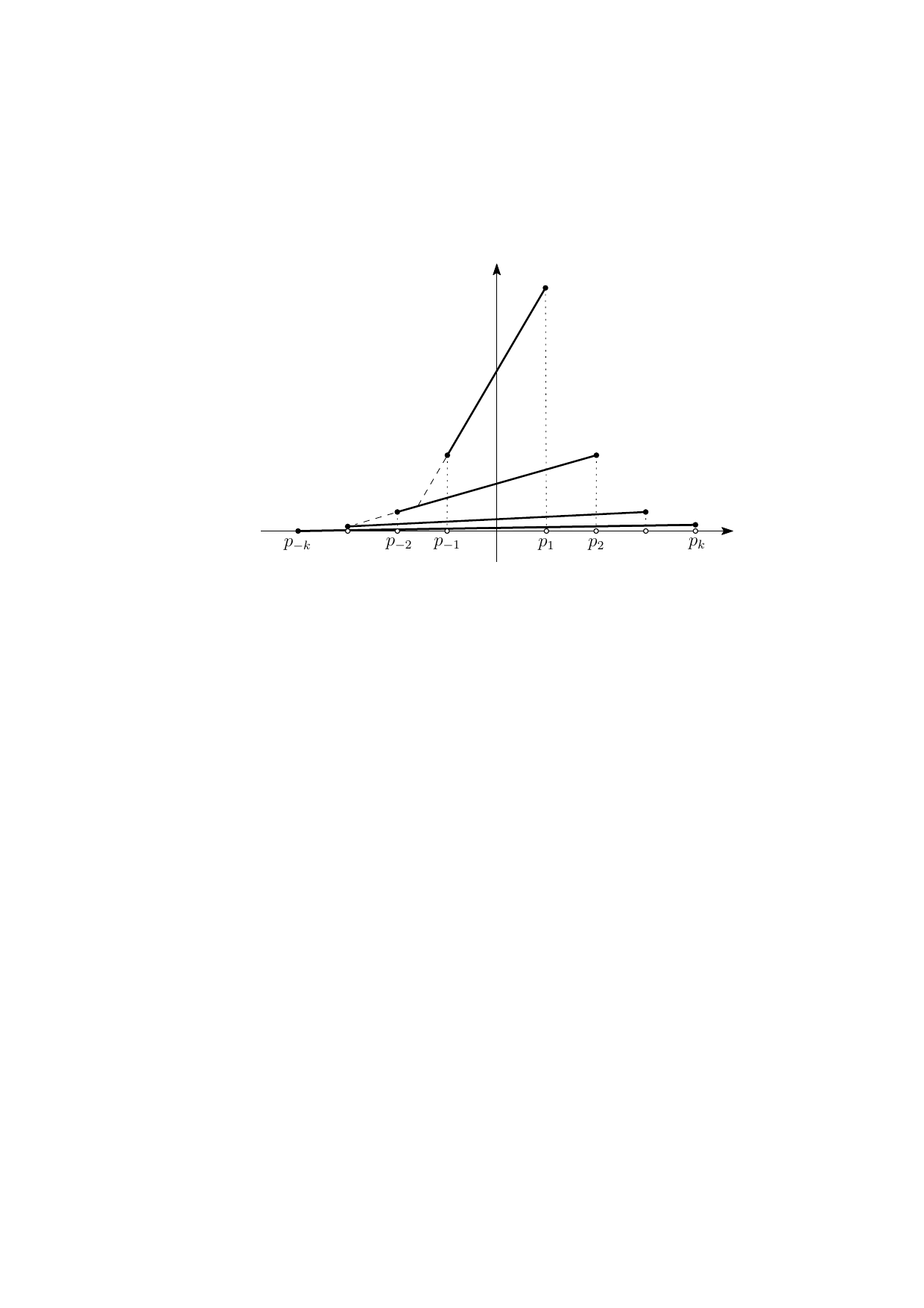}
	\caption{Illustration of our construction of a longest matching.}
		\label{matching-fig}
\end{figure}

\section{Some Properties of Longest Paths and Cycles}
\label{properties-sec}

In this section we give some structural properties of longest paths and cycles, 
possibly of independent interest. We state these properties only for cycles, but they hold for paths as well.
Two edges are in \emph{convex position} if they are edges of their convex hull. Two directed edges in convex position have {\em the same orientation} if they are both directed clockwise or counterclockwise along their convex hull.

\begin{observation}
\label{inconsistent-obs}
	Suppose that we orient the edges of a longest cycle $C$ to make it a directed cycle. Then $C$ cannot have \rv{any} pair of nonadjacent edges that are in convex position and have the same orientation along their convex hull.
\end{observation}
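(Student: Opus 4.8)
The plan is to argue by contradiction using an exchange argument, exactly in the spirit of the one-dimensional lemmas of Section~\ref{sec-DimensionOne}. Suppose $C$ is a longest cycle, oriented as a directed cycle, and suppose for contradiction that it contains two non-adjacent directed edges $(a,b)$ and $(c,d)$ that are in convex position and have the same orientation along the boundary of their convex hull $\conv{\{a,b,c,d\}}$. Since the two edges are non-adjacent in $C$, removing them from $C$ leaves two vertex-disjoint paths; one path has endpoints $b$ and $c$, the other has endpoints $d$ and $a$ (the endpoint labels are forced by the common orientation, which means that going around the convex hull we meet the four points in the cyclic order $a,b,c,d$). I would then reconnect the two paths using the two ``crossing'' diagonals of this convex quadrilateral, namely $ac$ and $bd$, obtaining a new spanning cycle $C'$ on the same point set.

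The key step is then the length comparison $|C'|-|C| = (|ac|+|bd|) - (|ab|+|cd|) > 0$. This is the standard fact that in a convex quadrilateral with vertices $a,b,c,d$ in this cyclic order, the sum of the two diagonals strictly exceeds the sum of either pair of opposite sides; it follows from applying the triangle inequality at the intersection point of the diagonals. Because the four points are in convex position (that is the hypothesis), the quadrilateral $abcd$ is genuinely convex and its diagonals $ac,bd$ do intersect in the interior, so the inequality is strict. Hence $|C'|>|C|$, contradicting the maximality of $C$.

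The one point that needs a little care — and which I expect to be the main (though minor) obstacle — is verifying that $C'$ is indeed a single spanning cycle and not a union of two disjoint cycles. This is precisely where the \emph{same orientation} hypothesis is used: if the two edges had opposite orientations along the convex hull, then removing them would leave paths with endpoints $\{b,d\}$ and $\{a,c\}$, and reconnecting with $ac$ and $bd$ would split $C$ into two cycles rather than produce one; whereas with the same orientation the endpoints pair up as $\{b,c\}$ and $\{a,d\}$, so adding $ac$ and $bd$ stitches the two paths back into one cycle of length $n$. I would spell this out by fixing the orientation and tracking which of $b,c$ (resp. $a,d$) is the tail and which is the head on each path, and observing that the new edges join a tail to a head across the two paths. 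Once this combinatorial bookkeeping is in place, the geometric inequality above finishes the proof.
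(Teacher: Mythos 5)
Your proposal is correct and follows essentially the same route as the paper: flip the two edges to the diagonals of their convex hull, use the crossing-diagonals triangle inequality for the strict length increase, and use the same-orientation hypothesis to guarantee the flip reconnects the two remaining paths into a single spanning cycle rather than splitting $C$ into two components. Your version just spells out the endpoint bookkeeping that the paper leaves implicit.
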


To verify this, note that if $C$ has two such edges, say $e_1$ and $e_2$, then flipping them (replacing $e_1$ and $e_2$ by the two diagonals of the convex hull of $e_1$ and $e_2$) would produce a longer undirected cycle as in Figure~\ref{inconsistent-fig}(a). Since $e_1$ and $e_2$ have the same orientation along their convex hull, the flip does not break the cycle into two components. 
{\color{mycolor}
If every directed simple polygon $S$ contained a pair of nonadjacent edges in convex position with the same orientation along their convex hull, Observation~\ref{inconsistent-obs} would imply Conjecture~\ref{cycle-conjecture}. However, some simple polygons do not have edges that can be flipped in this way; see \rv{for example}, Figure~\ref{inconsistent-fig}(b).}

\begin{figure}[ht!]
	\centering
	\setlength{\tabcolsep}{0in}
	$\begin{tabular}{cc}
		\multicolumn{1}{m{.5\columnwidth}}{\centering\vspace{0pt}\includegraphics[width=.37\columnwidth]{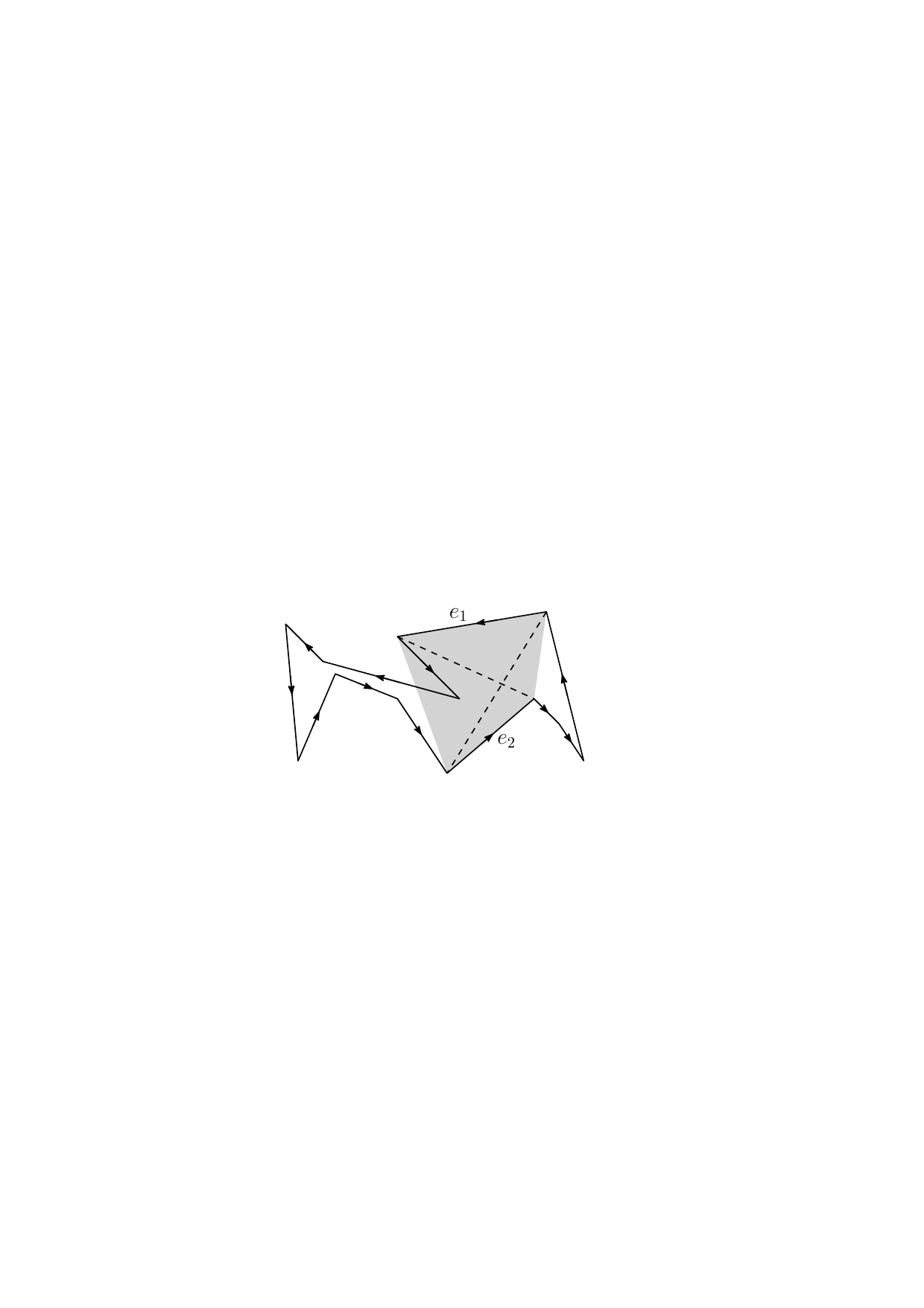}}
		&\multicolumn{1}{m{.5\columnwidth}}{\centering\vspace{0pt}\includegraphics[width=.45\columnwidth]{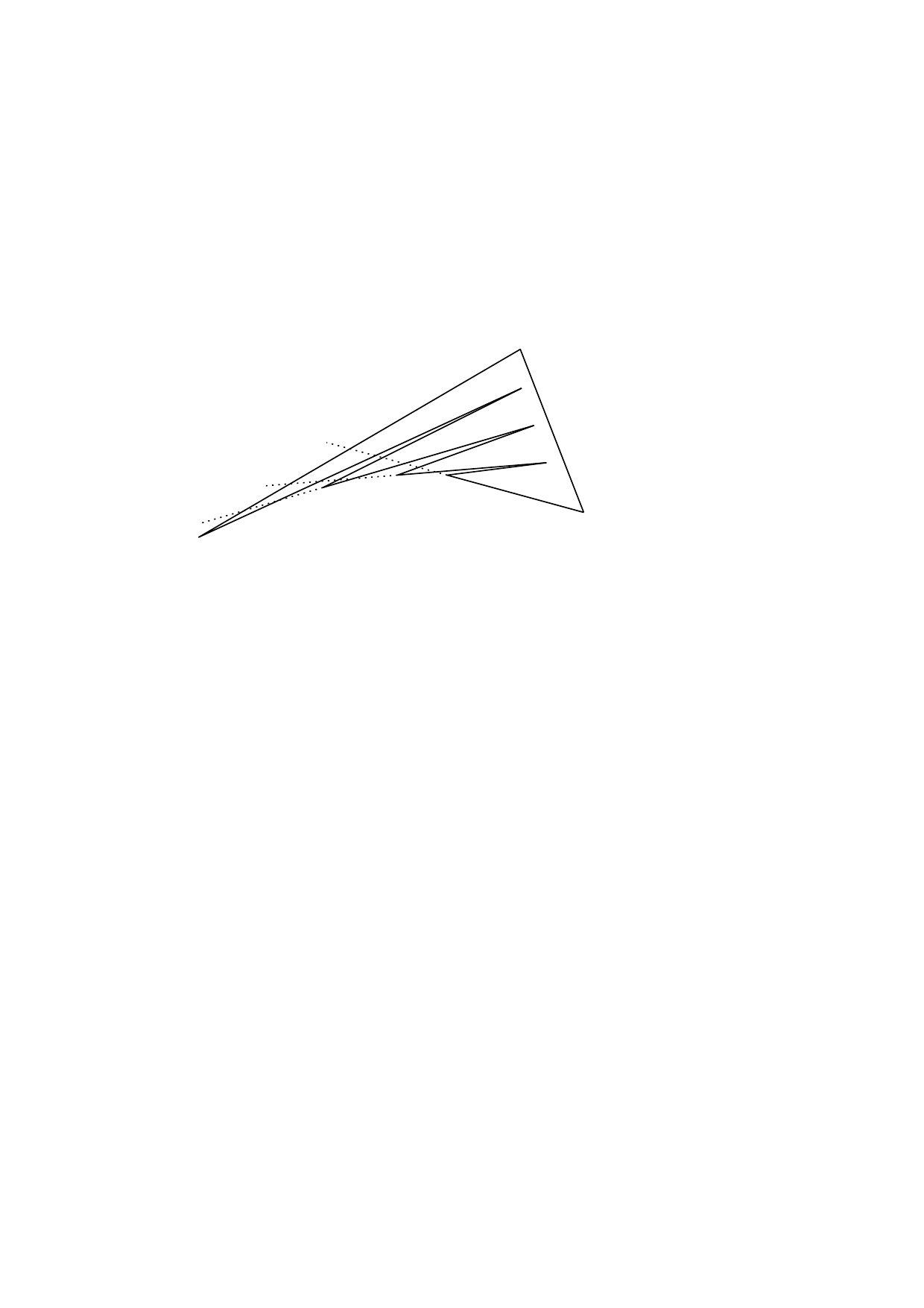}}
		\\
		(a)   &(b) 
	\end{tabular}$	
	\caption{(a) Flipping two edges in convex position. (b) A simple polygon with no pair of edges in convex position that  have the same orientation, no matter how we direct the polygon.}
\label{inconsistent-fig}
\end{figure}

\begin{observation}\label{diam-obs}
The longest cycle need not contain an edge between diametric points.
\end{observation}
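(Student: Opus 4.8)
The plan is to establish this with a single small explicit example: four points in convex position arranged as a ``thin trapezoid.'' Concretely, I would take
\[A=(-5,0),\quad B=(5,0),\quad C=(1,6),\quad D=(-1,6).\]
The first step is to pin down the elementary geometry of this configuration. One checks directly that the four points are in general and convex position, with cyclic order $A,B,C,D$, and that the six pairwise distances are $|AB|=10$, $|AC|=|BD|=\sqrt{72}$, $|BC|=|DA|=\sqrt{52}$, and $|CD|=2$; these are pairwise distinct and the largest is $|AB|=10$. Hence $\{A,B\}$ is the unique diametric pair, so the only diametric edge is $AB$.

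The second step is to identify the longest spanning cycle. A four-point set in convex position has exactly three spanning cycles: the boundary cycle $A$-$B$-$C$-$D$-$A$ and the two self-intersecting cycles $A$-$B$-$D$-$C$-$A$ and $A$-$C$-$B$-$D$-$A$, both of which use the two diagonals $AC$ and $BD$ of the quadrilateral. Applying the triangle inequality at the intersection point of $AC$ and $BD$ (the same kind of argument that underlies Observation~\ref{inconsistent-obs}) gives $|AC|+|BD|>|AB|+|CD|$ and $|AC|+|BD|>|BC|+|DA|$, so each self-intersecting cycle is strictly longer than the boundary cycle. It then remains to compare the two self-intersecting cycles: their lengths are $|AB|+|CD|+2|AC|=12+2\sqrt{72}$ and $2|BC|+2|AC|=2\sqrt{52}+2\sqrt{72}$ respectively, and $2\sqrt{52}>12$. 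Therefore the unique longest spanning cycle is $A$-$C$-$B$-$D$-$A$, whose edge set $\{AC,CB,BD,DA\}$ does not contain $AB$. This proves the observation.

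The argument has no genuinely hard step; the only thing that needs a little care is choosing the coordinates so that $|AB|$ is simultaneously the diameter \emph{and} the opposite-side pair $\{BC,DA\}$ is longer than the pair $\{AB,CD\}$ (the latter being exactly what forces the longest cycle to avoid $AB$). These two requirements push the quadrilateral to be a sufficiently ``thin'' trapezoid: $C$ and $D$ must lie near the perpendicular bisector of $AB$ (so that $|CD|$ is small) but far enough from the line $AB$ that $|BC|+|DA|>|AB|+|CD|$. In the symmetric family $A=(-a,0)$, $B=(a,0)$, $C=(\mu,h)$, $D=(-\mu,h)$ these conditions read $(a+\mu)^2+h^2<4a^2$ and $h^2>4a\mu$, and the displayed point set (with $a=5$, $h=6$, $\mu=1$) satisfies both with room to spare.
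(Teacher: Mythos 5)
Your proof is correct, and it verifies all the needed inequalities with exact arithmetic: $\{A,B\}$ is the unique diametric pair since $10>\sqrt{72}$, and of the three spanning cycles on four points in convex position, the one avoiding $AB$, namely $A$--$C$--$B$--$D$--$A$ of length $2\sqrt{72}+2\sqrt{52}$, strictly beats both $12+2\sqrt{72}$ and $12+2\sqrt{52}$. However, your construction is genuinely different from the paper's. The paper takes an isosceles right triangle $abc$ with the right angle at $b$, places single points at $a$ and $c$ and a cluster of two or more points near $b$; the diameter is the hypotenuse $ac$, and any cycle using $ac$ wastes its remaining $m+1$ edges on a path through the cluster (at most two of which are long), whereas a cycle that alternates between the cluster and $\{a,c\}$ picks up roughly $2|ab|+2|bc|>|ac|+2|ab|$. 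Your thin trapezoid buys a completely explicit, self-contained four-point witness with a clean parametric description ($(a+\mu)^2+h^2<4a^2$ and $h^2>4a\mu$) of when it works; the paper's clustered-triangle example buys immediate generalization to every $n\ge 4$ by enlarging the cluster, at the cost of being only a sketch. Since the observation merely asserts ``need not,'' a single example suffices and your argument is complete. One small wording slip: the six pairwise distances are not pairwise distinct ($|AC|=|BD|$ and $|BC|=|DA|$); only the four listed values $10,\sqrt{72},\sqrt{52},2$ are distinct. What your argument actually requires --- that $\{A,B\}$ is the \emph{unique} diametric pair --- does hold, so nothing breaks.
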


{\color{mycolor}To verify this observation consider an isosceles right triangle $abc$ whose right angle is at $b$.
Place one point at $a$, one point at $c$, and two or more points very close to $b$. Then, the longest cycle does not contain the diametric point pair $\{a,c\}$. 
This observation implies that a longest cycle may not be achieved by greedily choosing longest edges.  

The following proposition implies that if the longest cycle is noncrossing, it contains some edge whose length is among the smallest three-quarters of all distances defined by its vertices.}

{\color{mycolor}
\begin{proposition}
Let $S$ be a simple polygon (a noncrossing cycle) on $n$ points. Then \rv{the length of the shortest edge of $S$} is among the smallest \rv{$3n^2/8-7n/8+1$} distances of the $\binom{n}{2}$ point pairs.  
\end{proposition}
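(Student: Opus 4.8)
The plan is to show that the shortest edge of $S$ already witnesses the claim. An edge $e$ of $S$ lies ``among the smallest $3n^2/8+n/8$ distances'' exactly when fewer than $\tfrac{3n^2+n}{8}$ of the $\binom n2$ point pairs are strictly shorter than $e$, and that count is smallest when $e$ is the shortest edge; so let $e_0=ab$ be a shortest edge of $S$ and set $d_0=|ab|$. It then suffices to prove that fewer than $\tfrac{3n^2+n}{8}$ point pairs have length strictly less than $d_0$. The reduction I would use is: since $S$ is a simple polygon, deleting the two vertices $a,b$ (equivalently, deleting $e_0$ together with the two other edges of $S$ incident to $a$ and to $b$) leaves a \emph{noncrossing} Hamiltonian path $Q$ on the remaining $m=n-2$ points, and \emph{every} edge of $Q$ has length at least $d_0$, precisely because $e_0$ is a shortest edge of $S$.

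Of the $\binom n2$ pairs, exactly $2n-3$ are incident to $\{a,b\}$; of these, the pair $\{a,b\}$ itself has length $d_0$, so at most $2(n-2)$ of them can be short, while the remaining $\binom{n-2}{2}$ pairs lie among the vertices of $Q$. Thus the proposition reduces to a statement purely about $Q$: a noncrossing path on $m$ points, all of whose edges have length at least $d_0$, has at most roughly $\tfrac34\binom m2+O(m)$ pairs of length $<d_0$ — equivalently, at least about a quarter of all vertex pairs have length $\ge d_0$. Plugging $\binom{n-2}{2}-\bigl(\text{short pairs in }Q\bigr)$ together with the $\le 2(n-2)$ short incident pairs back into the count, and optimizing the constant, should yield exactly the bound $\tfrac{3n^2}{8}+\tfrac n8$; the small cases ($n\le 4$ directly, and $n$ up to about $12$ where the $n$ edges of $S$ already suffice since $\binom n2-(\tfrac{3n^2}{8}+\tfrac n8)\le n$) are checked separately.

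To prove the lemma about $Q$ I would introduce the \emph{short graph} $G$ on the vertices of $Q$ — join two vertices iff their distance is $<d_0$ — and look at its connected components $C_1,\dots,C_t$ with sizes $c_i$. Any pair with endpoints in two different components is automatically long, so it is enough to upper bound $\sum_i\binom{c_i}{2}$, i.e.\ to show that no component can be too large. This is where the geometry enters: within a component the diameter is at most $(c_i-1)d_0$, the path $Q$ has no edge joining two points at distance $<d_0$, so $Q$ restricted to $C_i$ is a union of noncrossing subpaths with all edges $\ge d_0$ that alternate with vertices outside $C_i$; a disk‑packing / separation estimate (using noncrossingness and the edge‑length lower bound \emph{simultaneously}) bounds $c_i$ in terms of how often $Q$ enters and leaves $C_i$, and summing over components gives the quadratic lower bound on the number of long pairs.

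The step I expect to be the main obstacle is exactly this geometric packing estimate. The naive triangle‑inequality argument only certifies pairs of length $\ge d_0/2$, which is useless here, so one must genuinely exploit both the noncrossing property of $Q$ and the uniform lower bound on its edge lengths, and then tune the estimate so that the constant lands on $\tfrac14$ for $Q$ (equivalently $\tfrac38$ in the final statement). Getting the precise constant — rather than merely a positive one — is the delicate part, and I would expect the extremal configuration to be a ``two‑cluster zigzag'', which should pin down where the $\tfrac38$ (and the lower‑order $\tfrac18$) comes from.
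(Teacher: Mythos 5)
Your outer skeleton matches the paper's: reduce to the shortest edge $e_0$ of $S$ and exhibit at least $\frac14\left(\binom n2-2n\right)$ point pairs strictly longer than $d_0=|e_0|$, which gives the rank $\binom n2-\frac14\left(\binom n2-2n\right)=\frac{3n^2+n}{8}$. But the step you yourself flag as the main obstacle is a genuine gap, and the route you propose for it cannot be repaired. Reducing the count of short pairs in $Q$ to $\sum_i\binom{c_i}{2}$ over components of the ``short graph'' is hopeless: a single component can contain \emph{all} vertices of $Q$ (take points spaced $0.99\,d_0$ apart along a slightly perturbed line; consecutive points are short-adjacent, so the short graph is connected, yet there is a noncrossing spanning path whose every edge has length about $1.98\,d_0$, and only $O(m)$ of the $\binom m2$ pairs are actually shorter than $d_0$). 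So ``no component can be too large'' is false, $\sum_i\binom{c_i}{2}$ can equal $\binom m2$, and no disk-packing refinement of this decomposition will recover the constant $\frac34$. Moreover, even granting the path lemma with a $+O(m)$ error term, your final bound acquires an uncontrolled linear term that still has to be beaten down to $+\frac n8$.

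The missing idea is far more elementary and makes the cluster/packing machinery unnecessary. For any two non-crossing segments $e=ab$ and $e'=cd$, some connecting pair $\{p,p'\}$ with $p\in\{a,b\}$ and $p'\in\{c,d\}$ satisfies $|pp'|>\min(|e|,|e'|)$: if the four endpoints are in convex position, the crossing re-matching of the endpoints is strictly longer in total than $|e|+|e'|$, so its longer edge exceeds $\frac{|e|+|e'|}{2}\ge\min(|e|,|e'|)$; if one endpoint lies inside the triangle of the other three, use that the farthest point of a triangle from any point is one of its vertices. Apply this to every pair of edges of $S$ at distance at least $2$ along the cycle: there are $\binom n2-2n$ such pairs, they never cross because $S$ is simple, and each produces a pair $\{p,p'\}$ with $|pp'|>d_0$. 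Since a fixed $\{p,p'\}$ can arise from at most the $4$ edge pairs obtained by combining the two edges at $p$ with the two at $p'$, you get at least $\frac14\left(\binom n2-2n\right)$ distinct pairs longer than $d_0$, which is exactly the count your plan needs (and, applied to a path, it also proves your lemma about $Q$, constant included). With this fact in hand you do not even need to delete $a$ and $b$ or use the edge-length lower bound on $Q$; the argument runs directly on $S$, which is how the paper proceeds.
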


\begin{proof}
Let $e$ and $e'$ be two edges of $S$ such that their distance along $S$ (\rv{the number of edges between them}) 
is at least 2. Since $S$ is a simple polygon, $e$ and $e'$ do not cross. Thus, there is an endpoint $p$ of $e$ and an endpoint $p'$ of $e'$ such that $|pp'|$ is larger than the length of the shorter of $e$ and $e'$, and $pp'$ is not an edge of $S$. The number of pairs of edges at distance $0$ is $n$, and the number of pairs of edges at distance $1$ is also $n$. Thus, the total number of pairs of edges at a distance at least $2$ is $\binom{n}{2}-2n$. Each such pair of edges yields a pair $\{p,p'\}$. Each $\{p,p'\}$ can be counted for $4$ different pairs of edges \rv{along $S$} that are obtained by combining the two edges incident to $p$ and the two edges incident to $p'$. Therefore, the total number of distinct pairs $\{p,p'\}$ is at least $\frac{1}{4}\left(\binom{n}{2}-2n\right)$. Subtracting this from the total number $\binom{n}{2}$ of point pairs \rv{and then excluding the $n-1$ longest edges of $S$ yield the claimed bound for the shortest edge of $S$}.
\end{proof}

\paragraph{Acknowledgements.}{This work was initiated at the 10th Annual Workshop on Geometry and Graphs, held at Bellairs Research Institute in Barbados in February 2023. We thank the organizers and the participants. We are very grateful to the anonymous reviewers who meticulously
verified our proofs, and provided detailed feedback that helped to generalize main arguments in our proofs, and to improve the readability of the paper.}

\bibliographystyle{plainurl}
\bibliography{Noncrossing-Long-Cycles}

\end{document}